\pgfplotsset{compat=1.15}
\newtheorem{theorem}{Theorem}
\newtheorem{corollary}{Corollary}
\newtheorem{lemma}{Lemma}
\def\I{\mathcal{I}}
\def\RR{\mathcal{R}}
\def\I{\mathcal{I}}
\def\C{\mathbb{C}}
\def\x{\boldsymbol{x}}
\def\h{\mathbf{h}}
\def\BibTeX{{\rm B\kern-.05em{\sc i\kern-.025em b}\kern-.08em
    T\kern-.1667em\lower.7ex\hbox{E}\kern-.125emX}}
\begin{document}
\receiveddate{26 October, 2022}
\reviseddate{7 January, 2023}
\accepteddate{7 January, 2023}
\publisheddate{XX Month, XXXX}
\currentdate{XX Month, XXXX}
\title{Diversity Order Analysis for Quantized Constant Envelope Transmission}

\author{Zheyu Wu$^{1,2}$, Jiageng Wu$^{3}$, Wei-Kun Chen$^{4}$, AND 
Ya-Feng Liu$^{1}$}
\affil{LSEC, ICMSEC, AMSS, Chinese Academy of Sciences, Beijing, China}
\affil{School of Mathematical Sciences, University of Chinese Academy of Sciences, Beijing, China}
\affil{School of Mathematics, Jilin University, Changchun, China}
\affil{School of Mathematics and Statistics, Beijing Institute of Technology, Beijing, China}
\corresp{CORRESPONDING AUTHOR: Ya-Feng Liu (e-mail: yafliu@lsec.cc.ac.cn).}
\authornote{The work of Zheyu Wu and Ya-Feng Liu was supported in part by the National Natural Science Foundation of China (NSFC) under Grant 12288201, Grant 12022116, and Grant 11991021.
 The work of Wei-Kun Chen was supported in part by the NSFC under Grant 12101048 and Beijing Institute of Technology Research Fund Program for Young Scholars.}
\markboth{Diversity Order Analysis for Quantized Constant Envelope Transmission}{Wu \textit{et al.}}

\begin{abstract}
Quantized constant envelope (QCE) transmission is a popular and effective technique to reduce the hardware cost and improve the power efficiency of 5G and beyond systems equipped with large antenna arrays.
It has been widely observed that the number of quantization levels has a substantial impact on the system performance. This paper aims  to quantify the impact of the number of quantization levels on the system performance. Specifically, we consider a downlink single-user multiple-input-single-output (MISO) system with $M$-phase shift keying (PSK) constellation under the Rayleigh fading channel.  We first derive a novel bound on the system symbol error probability (SEP). Based on the derived SEP bound, we characterize the achievable diversity order of the quantized matched filter (MF) precoding  strategy. Our results show that full diversity order can be achieved when the number of quantization levels $L$ is greater than the PSK constellation order $M,$ i.e.,  $L>M$, only  half diversity order is achievable when $L=M$,  and the achievable diversity order is $0$ when $L<M$. Simulation results verify our theoretical analysis.
\end{abstract}

\begin{IEEEkeywords}
Diversity order analysis, Large antenna array, QCE transmission,  SEP.
\end{IEEEkeywords}


\maketitle

\section{INTRODUCTION}
\IEEEPARstart{L}{arge} antenna array is a promising technology to achieve high data rate and high reliability of wireless communication systems \cite{massivemimo2,massivemimo1,massivemimo3}. However, the power consumption and hardware cost of the system also grow with the number of antennas, which is a major concern for the practical implementation of the large antenna array technology. To address such issues, it is necessary to employ low-cost and energy-efficient hardware components at the base station (BS). 
It is well known that the most power hungry components at the BS  are the  power amplifiers (PAs) \cite{pa}. 
To improve  the efficiency of the PAs, transmit signals with low peak-to-average power ratios (PAPRs) are desirable. In particular, constant envelope (CE) transmission, where the transmit signals from each antenna are restricted to have the same amplitude, has attracted a lot of  research interests as it facilitates the use of the most efficient and cheapest PAs.
It has been shown in the pioneering works \cite{CE2,CE} that with $N$ transmit antennas at the BS, an ${O}(N)$ array power gain is achievable for CE transmission, as in the case of conventional transmission schemes without the CE constraint. In addition, numerous well-designed CE transmission strategies have been proposed and have been shown (via simulations) to have good symbol error rate (SER) performance, see, e.g.,  \cite{CEdesign1,CEdesign2,CEdesign3,CEdesign4,CEdesign5} and the references therein. {\color{black}CE transmission has also found wide applications in many other scenarios \cite{phy,radar,irs}.}

However, a practical issue associated with  CE transmission is that  the digital-to-analog converters (DACs) at each antenna element must have infinite or very high resolution to ensure that the transmit signals can take any phase. This will lead to high hardware cost and power consumption of the communication system since high-resolution DACs are expensive and the power consumption of the DACs increases exponentially with the resolution number \cite{resolution1}. Due to this, a more practical transmission scheme called quantized constant envelope (QCE) transmission \cite{trellis} has been considered recently, where low-resolution DACs are employed and the phases of the transmit signals can only be selected from a (possibly small) finite set. 

Existing works on QCE transmission mainly focused on precoding design \cite{trellis,ciqce,AM,GEMM,IDE}, and in particular one-bit precoding design \cite{SQUID,C3PO2,SEP2,CIfirst,CImodel,PBB,conference, journal}, which is a special case of QCE transmission. The only few works that considered the performance analysis all focused on the one-bit case. 
Specifically, the authors in \cite{SQUID,MFrate} derived lower bounds on the achievable rate of a one-bit MIMO system.  The result in \cite{MFrate} was further extended to the frequency selective channel in \cite{frequency}. In \cite{ZF}, the authors considered the one-bit zero-forcing precoder and derived closed-form symbol error probability (SEP) approximations for large antenna array systems. To the best of our knowledge, there is still a theoretical gap in the performance analysis of general QCE transmission.

The motivation behind this work is based on the following observations that have been drawn from the simulations in existing works.
First, the CE transmission can generally achieve good SER performance \cite{CE} while one-bit transmission sometimes suffers from a severe SER floor \cite{SQUID}.  Second,  slightly increasing the resolution of DACs from $1$ bit to $2-3$ bits can sometimes significantly
improve the SER performance of the system \cite{IDE}. The goal of this paper is to theoretically characterize the system performance of a simple but popularly used QCE transmission strategy and shed some light on the above observations.

In this paper, we consider a downlink single-user MISO system with $M$-phase shift keying (PSK) modulation. The main contributions of this paper are twofold. First, we derive a new bound on the system SEP and the bound only involves an elegant quantity known as the safety margin \cite{CI3,CI2,CItutorial}. The bound is universal and independently interesting, because it might be useful for the SEP analysis of possibly many communication scenarios. Second and more importantly, we characterize the diversity order of the quantized matched filter (MF) precoder. {\color{black}The reasons for the choice of the quantized MF precoder is that it is asymptotically optimal when the number of quantization levels goes to infinity in the considered system, and it admits a closed-form expression and thus is amenable to analysis.} We show that the quantized MF precoder is able to achieve full diversity order when the number of quantization levels $L$ is larger than $M,$ while it can only achieve half and zero diversity order when $L=M$ and $L<M$, respectively. {\color{black}The above results hold as long as $L$ and $M$ are positive integers and $M>1$
}. Simulation results show that the analysis results are correct. 

The remaining parts of this paper are organized as follows. Section \ref{section2} describes the system model and the problem formulation. Section \ref{section3} derives an important inequality on the SEP, which serves as the main tool for our analysis. Section \ref{section4} gives the diversity order analysis.  Simulations are given in Section \ref{section5} to verify our theoretical results and the paper is concluded in Section \ref{section6}.

Throughout the paper,  we use  $x$, $\x$, and $\mathcal{X}$ to denote scalar, vector, and set, respectively. For a scalar $x\in\C$, $|x|$, $\arg(x)$, $\RR(x)$, and $\I(x)$ return the absolute value, the argument, the real part, and the imaginary part of $x$, respectively. For a vector $\x\in\C^n$, $x_i$ denotes the $i$-th entry of $\x$ and $\|\x\|_p$ denotes the $p$-norm of $\x$, where $p\in\{1,2\}$; $\x^\mathsf{T}$, $\x^\dagger$, and $\x^{\mathsf{H}}$ denote the transpose, the conjugate, and the Hermitian transpose of $\x$, respectively.  For a random variable $X$, we use $p_X(\cdot)$ and $F_X(\cdot)$ to denote its probability density function (PDF) and cumulative distribution function (CDF), respectively. $\mathbb{E}\left[\cdot\right]$ and $\mathbb{P}(\cdot)$ return the expectation and the probability of their corresponding argument, respectively. $\mathcal{CN}(0,\sigma^2)$ and $\mathcal{N}(0,\sigma^2)$ represent the zero-mean circularly symmetric complex Gaussian distribution and zero-mean Gaussian distribution (in the real space) with variance $\sigma^2$, respectively. Finally, $j$ denotes the imaginary unit (satisfying $j^2=-1$).

\vspace{-0.2cm}
\section{SYSTEM MODEL AND PROBLEM FORMULATION}\label{section2}
Consider a MISO system where an $N$-antenna BS serves a single-antenna user.  Let $\h=(h_1,h_2,\dots,h_N)^\mathsf{T}$ denote the channel vector between the BS and the user and $\x=(x_1,x_2,\dots,x_N)^\mathsf{T}$ denote the transmitted signal from the BS. The signal received by the user can then be expressed as 
\begin{equation*}
y=\h^\mathsf{T}\x+n,
\end{equation*}
where $n\sim\mathcal{C}\mathcal{N}(0,\sigma^2)$ is the additive white Gaussian noise.
We  assume that the entries of $\h$ are independent and identically distributed (i.i.d.) following $\mathcal{C}\mathcal{N}(0,1)$. 

In this paper, we consider  QCE transmission, i.e.,  each antenna is only allowed to transmit QCE signals. Mathematically,  the QCE constraint can be expressed as $$x_i\hspace{-0.05cm}\in\hspace{-0.05cm}\mathcal{X}_L\hspace{-0.03cm}\triangleq\left\{\hspace{-0.06cm}\sqrt{\frac{P_T}{N}}e^{j\frac{(2l-1)\pi}{L}}\hspace{-0.15cm},~l\hspace{-0.03cm}=\hspace{-0.03cm}1,2,\dots,L\right\}\hspace{-0.1cm},~i=1,2,\dots,N,$$ where $P_T$ is the total transmit power at the BS and  $L$ is the number of quantization levels, {\color{black}i.e., the number of points in $\mathcal{X}_L$}. In what follows, we set $P_T=1$ for simplicity. 
 Let $s$ be the intended symbol for the user, which is independent of $\h$ and $n$. We focus on $M$-PSK constellation, i.e., $s$ is uniformly drawn from  $\mathcal{S}_M=\{e^{j\frac{2\pi (m-1)}{M}},~m=1,2,\dots,M\}$.  
{\color{black}  We note here that in practice, both $L$ and $M$ should be a power of $2$, and an $L$-level quantization corresponds to $(\log_2 {L}-1)$-bit quantization of DACs.  However, since our analysis holds for arbitrary quantization level $L$ and constellation order $M>1$, we choose to present our analysis and results in the most general form and just assume that $L$ and $M$ are positive integers and $M>1$ in the following.} 

In this paper, we adopt a simple QCE transmission strategy called quantized MF precoding:
{\color{black}\begin{equation}\label{MF}
\x=q_L(s\h^\dagger),
\end{equation} where $q_L(\cdot)$ is the quantization function that maps its argument component-wise to the nearest points in $\mathcal{X}_L$. The reasons for the choice of the quantized MF precoder are as follows. First, it is asymptotically optimal when the number of quantization levels $L$ goes to infinity. More specifically, when there is no quantization (i.e., $L=\infty$), the transmitted signals of quantized MF precoding can be expressed as 
\begin{equation}\label{qmf}
 x_i=\frac{1}{\sqrt{N}}e^{-j\arg(h_i)}s,~i=1,2,\dots,N, 
 \end{equation} with which the useful signal power is maximized. Second, the quantized MF precoder admits a closed-form expression (i.e., \eqref{MF}) and is of low computational complexity, which is more practical and more amenable to analysis than many other transmission strategies depending on numerical solutions of certain discrete optimization problems (e.g., those in \cite{ciqce,AM,GEMM,IDE}). With quantized MF precoding, the system model can be expressed as }
\begin{equation}\label{DCEsys}
y=\h^\mathsf{T}q_L(s\h^\dagger)+n=\frac{1}{\sqrt{N}}\sum_{i=1}^N|h_i|e^{j\theta_i}s+n,
\end{equation}
where $\theta_i\in[-\frac{\pi}{L},\frac{\pi}{L}]$, $i=1,2,\dots, N$, are the quantization error (of angle).  At the receiver side, we assume that nearest neighbor decoding is employed, that is, the user maps its received signal $y$ to the nearest constellation point $\hat{s}\in\mathcal{S}_M$. 

Our goal in this paper is to characterize the impact of the number of quantization levels on the  system performance. We adopt the diversity order \cite{diversity} as our performance metric\footnote{{\color{black}There are also many other important performance metrics, including the spectral and energy efficiency. Investigating the impact of the number of quantization levels under these performance metrics can be interesting future works.}}, which is a classical metric that characterizes the rate at which the SEP, i.e., $\mathbb{P}\left(\hat{s}\neq s\right)$, tends to zero as the signal-to-noise-ratio (SNR) grows. Its  definition is given by
\begin{equation}\label{def:diversity}
d=\lim_{\rho\to+\infty}-\frac{\log \mathbb{P}(\hat{s}\neq s)}{\log\rho},
\end{equation}
where $\rho=\frac{1}{\sigma^2}$ is the SNR.
Roughly speaking,  the above definition says that in the high SNR regime, the SEP will scale as $\rho^{-d}$. 

\vspace{-0.2cm}
\section{A NEW BOUND ON THE SEP}\label{section3}
In this section, we derive a new bound on the system SEP. The new bound will serve as the main tool for our subsequent analysis and is independently interesting, because it might also be useful for the SEP analysis of other  communication scenarios.

Consider the following system model:
\begin{equation}\label{sys}
y=\beta s+n,
\end{equation}
where $s$ is uniformly drawn from $M$-PSK constellation, $n\sim\mathcal{C}\mathcal{N}(0,\sigma^2)$, and $\beta\in \C$ is a constant. Define 
\begin{equation}\label{alpha}
\alpha=\RR(\beta)-|\I(\beta)|\cot\frac{\pi}{M}.
\end{equation}
The above quantity is well-known as the safety margin in the literature and is widely adopted as a  performance metric for precoding design \cite{CI3,CI2,CItutorial}. Geometrically, it characterizes the distance between the noise-free received signal and the decision boundary of the intended symbol, as shown in Fig. \ref{fig1}. It is widely believed that the safety margin has an essential impact on the system SEP.  
In the following theorem, we give a quantitative characterization of the relationship between the SEP and the safety margin $\alpha$. 




\begin{theorem}\label{sepbound}
The {\normalfont{SEP}} of model \eqref{sys} can be bounded as
\begin{equation}\label{in3}
Q\left(\frac{\sqrt{2}\sin\frac{\pi}{M}\alpha}{\sigma}\right)\leq\text{\normalfont{SEP}}\leq 2Q\left(\frac{\sqrt{2}\sin\frac{\pi}{M}\alpha}{\sigma} \right),
\end{equation}
where $\alpha$ is given in \eqref{alpha} and $Q(x)=\frac{1}{\sqrt{2\pi}}\int_x^{\infty}e^{-\frac{1}{2}x^2}dx$ is the tail distribution function of the standard Gaussian distribution.
\end{theorem}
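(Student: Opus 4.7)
The plan is to reduce to $s=1$ using the $M$-fold rotational symmetry of $\mathcal{S}_M$ together with the rotational invariance of the $\mathcal{C}\mathcal{N}(0,\sigma^2)$ noise under nearest-neighbor decoding, and to further assume $\I(\beta)\ge 0$ by conjugating the entire model if needed (which preserves $\alpha$, the noise distribution, and $\mathcal{S}_M$). After these reductions, the nearest-neighbor decoder outputs $s=1$ precisely when $y=\beta+n$ lies in the open wedge $W=\{z\in\C:|\arg z|<\pi/M\}$, whose boundary consists of the two rays from the origin at angles $\pm\pi/M$.

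The key geometric step is to write $\C\setminus W$ as the union of two half-planes $H_1,H_2$ obtained by extending each boundary ray to a full line through the origin and taking the half-plane opposite to $W$. A short angular check, valid for every integer $M\ge 2$, gives $\C\setminus W=H_1\cup H_2$ up to a set of measure zero. The signed perpendicular distances from $\beta$ to the two bounding lines, measured in the direction pointing into $W$, are
\begin{equation*}
d_1=\RR(\beta)\sin\tfrac{\pi}{M}-\I(\beta)\cos\tfrac{\pi}{M},\qquad d_2=\RR(\beta)\sin\tfrac{\pi}{M}+\I(\beta)\cos\tfrac{\pi}{M},
\end{equation*}
so the assumption $\I(\beta)\ge 0$ yields $d_1\le d_2$, and a direct comparison with \eqref{alpha} gives $d_1=\alpha\sin(\pi/M)$.

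Finally I would project the noise onto the inward unit normals of the two lines. Since $\RR(n),\I(n)\sim\mathcal{N}(0,\sigma^2/2)$ are independent, each such projection is $\mathcal{N}(0,\sigma^2/2)$, hence $\mathbb{P}(y\in H_i)=Q(d_i\sqrt{2}/\sigma)$ irrespective of the sign of $d_i$. The upper bound then follows from the union bound together with the monotonicity $Q(d_2\sqrt{2}/\sigma)\le Q(d_1\sqrt{2}/\sigma)$, while the lower bound follows from the trivial inequality $\mathbb{P}(y\in H_1\cup H_2)\ge\mathbb{P}(y\in H_1)$. The only mildly delicate point is the regime $\alpha<0$, where $d_1<0$ and the upper bound $2Q(d_1\sqrt{2}/\sigma)\ge 1$ becomes vacuous yet remains formally true; no case split is needed because the Gaussian-tail identity above is sign-agnostic. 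The substantive steps are thus the decomposition $\C\setminus W=H_1\cup H_2$ and the identification $d_1=\alpha\sin(\pi/M)$; everything else is routine bookkeeping.
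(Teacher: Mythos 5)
Your proposal is correct and takes essentially the same route as the paper: your half-planes $H_1,H_2$ are precisely the paper's two events $\I(\beta+n)\geq \tan\frac{\pi}{M}\RR(\beta+n)$ and $\I(\beta+n)\leq -\tan\frac{\pi}{M}\RR(\beta+n)$, your Gaussian projection onto the inward normals reproduces the paper's observation that $\cot\frac{\pi}{M}\I(n)-\RR(n)\sim\mathcal{N}\left(0,\frac{\sigma^2}{2\sin^2\frac{\pi}{M}}\right)$, and your union bound plus single-event containment is exactly the paper's sandwich between the maximum and the sum of the two one-sided probabilities. The only cosmetic difference is that you normalize $\I(\beta)\geq 0$ by conjugation up front, whereas the paper keeps both terms and takes $\alpha=\min\{\RR(\beta)-\cot\frac{\pi}{M}\I(\beta),\RR(\beta)+\cot\frac{\pi}{M}\I(\beta)\}$ at the end.
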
\vspace{-0.4cm}
\begin{proof}
Let $\mathcal{D}_s$ denote the decision region of symbol $s$ and let $s_1,s_2, \dots, s_M$ be the constellation points in $\mathcal{S}_M$, where $s_m=e^{\frac{j2\pi(m-1)}{M}}$.  Then, the SEP can be expressed as 
\begin{equation}\label{eq:proof5}
\begin{aligned}
\text{SEP}=\mathbb{P}\left(y\notin \mathcal{D}_s\right)
=\frac{1}{M}\sum_{m=1}^M\mathbb{P}\left(y\notin \mathcal{D}_s|s=s_m\right),
\end{aligned}
\end{equation}
where the second equality holds since $s$ is uniformly drawn from $\mathcal{S}_M$. Due to the symmetry of the PSK constellation, each probability in the above summation is equal and thus we only need to consider one of them. For simplicity, we consider $\mathbb{P}(y\notin \mathcal{D}_s|s=s_1)$,  where $s_1=1$. Since nearest neighbor decoding is employed, the decision region of $s_1$ is
$$\begin{aligned}
\mathcal{D}_{s_1}&=\left\{y\mid\arg(y)\in\left(-\frac{\pi}{M},\frac{\pi}{M}\right)\right\}\\
&=\left\{y\mid |\I(y)|< \tan\frac{\pi}{M}\RR(y)\right\};
\end{aligned}$$
see Fig. \ref{fig1} for an illustration of the decision region of $s_1$. Based on this, we can express $\mathbb{P}(y\notin D_s|s=s_1)$ as

\definecolor{ccqqqq}{rgb}{0.8,0,0}
\definecolor{qqqqtt}{rgb}{0,0,0.2}
\definecolor{yqyqyq}{rgb}{0.5019607843137255,0.5019607843137255,0.5019607843137255}
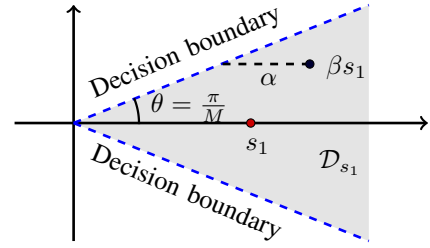
\begin{figure}
\centering
\begin{tikzpicture}[scale=0.785]
\fill[line width=0pt,color=yqyqyq,fill=yqyqyq,fill opacity=0.21] (-1,2) -- (-6,0) -- (-1,-2.0) -- cycle;
\draw [->,line width=1pt] (-7,0) -- (0,0);
\draw [->,line width=1pt] (-6,-2) -- (-6,2);
\draw [dashed,line width=1pt,blue] (-6,0) -- (-1,2);
\draw [dashed,line width=1pt,blue] (-6,0) -- (-1,-2);
\draw [shift={(-6,-0)},line width=0.85pt]  plot[domain=-0.017192971048285877:0.39,variable=\t]({1*1.107172122604702*cos(\t r)+0*1.107172122604702*sin(\t r)},{0*1.107172122604702*cos(\t r)+1*1.107172122604702*sin(\t r)});
\draw (-4.85,0.65) node[anchor=north west] {$\theta=\frac{\pi}{M}$};
\draw (-3.25,-0.1) node[anchor=north west] {$s_1$};
\draw (-3,1) node[anchor=north west] {$\alpha$};
\draw (-2,-0.3) node[anchor=north west] {$\mathcal{D}_{s_1}$};
\draw (-1.88,1.3) node[anchor=north west] {$\beta s_1$};
\draw [dashed,line width=1pt] (-2,1) -- (-3.5,1);
\draw [fill=qqqqtt] (-2,1) circle (2pt);
\draw [fill=ccqqqq] (-3,0) circle (2pt);
\begin{turn}{23} 
\draw (-3.3,2.7) node {Decision boundary};
\end{turn}
\begin{turn}{-23} 
\draw (-3.3,-2.7) node{Decision boundary};
\end{turn}
\end{tikzpicture}
\vspace{-0.6cm}
\caption{An illustration of the decision region $\mathcal{D}_{s_1}$ of symbol $s_1$ and  its safety margin $\alpha$. \label{fig1}}
\end{figure}
\begin{equation*}\label{eq:proof1}
\begin{aligned}
~\mathbb{P}(y\notin \mathcal{D}_s|s=s_1)&=\mathbb{P}(\beta+n\notin \mathcal{D}_{s_1})\\
&=\mathbb{P}\left(|\I(\beta+n)|\geq \tan\frac{\pi}{M}\RR(\beta+n)\right),
\end{aligned}
\end{equation*}
which can further be lower and upper bounded by
\begin{equation}\label{eq:proof2}
\begin{aligned}
 &\quad~\max\left\{\mathbb{P}\left(\I(\beta+n)\geq \tan\frac{\pi}{M}\RR(\beta+n)\right),\right.\\
 &\qquad\hspace{0.7cm} \left.\mathbb{P}\left(\I(\beta+n)\leq -\tan\frac{\pi}{M}\RR(\beta+n)\right)\right\}\\
&\leq\mathbb{P}\left(|\I(\beta+n)|\geq \tan\frac{\pi}{M}\RR(\beta+n)\right)\\
&\leq\mathbb{P}\left(\I(\beta+n)\geq \tan\frac{\pi}{M}\RR(\beta+n)\right)\\
&\quad+\mathbb{P}\left(\I(\beta+n)\leq -\tan\frac{\pi}{M}\RR(\beta+n)\right).
\end{aligned}
\end{equation}
For $\mathbb{P}\left(\I(\beta+n)\geq \tan\frac{\pi}{M}\RR(\beta+n)\right)$, we have
\begin{equation}\label{eq:proof3}
\begin{aligned}
&\quad~\mathbb{P}\left(\I(\beta+n)\geq \tan\frac{\pi}{M}\RR(\beta+n)\right)\\
&=\mathbb{P}\left(\cot\frac{\pi}{M}\I(n)-\RR(n)\geq \RR(\beta)-\cot\frac{\pi}{M}\I(\beta)\right)\\
&=Q\left(\frac{\sqrt{2}\sin\frac{\pi}{M}}{\sigma}\left(\RR(\beta)-\cot\frac{\pi}{M}\I(\beta)\right)\right),
\end{aligned}
\end{equation}
where the last equality holds since $n\sim\mathcal{CN}(0,\sigma^2)$ and thus $\cot\frac{\pi}{M}\I(n)-\RR(n)\sim\mathcal{N}\left(0,\frac{\sigma^2}{2\sin^2\frac{\pi}{M}}\right)$. Similarly, we can show that 
\begin{equation}\label{eq:proof4}
\begin{aligned}
&\quad~\mathbb{P}\left(\I(\beta+n)\leq -\tan\frac{\pi}{M}\RR(\beta+n)\right)\\
&=Q\left(\frac{\sqrt{2}\sin\frac{\pi}{M}}{\sigma}\left(\RR(\beta)+\cot\frac{\pi}{M}\I(\beta)\right)\right).
\end{aligned}
\end{equation}
Note that $\alpha=\min\{\RR(\beta)-\cot\frac{\pi}{M}\I(\beta),\RR(\beta)+\cot\frac{\pi}{M}\I(\beta)\}$ and $Q(\cdot)$ is a decreasing function. Combining this with \eqref{eq:proof5}--\eqref{eq:proof4}, we have  the desired result in \eqref{in3}, which completes the proof.


\end{proof}
Several remarks on Theorem \ref{sepbound} are in order. First, the upper bound in \eqref{in3} has already been derived in \cite{SEP3}.  Our proof is different from that in \cite{SEP3} and can give both lower and upper bounds in \eqref{in3} simultaneously. It turns out that both the lower and upper bounds in \eqref{in3} are important in the following diversity order analysis.  Second, 
if $\beta$ in \eqref{sys} is a positive constant, i.e., $\beta>0$, then $\alpha=\beta$ and inequality  \eqref{in3} reduces to the following well-known inequality \cite{digitalcommunication}:
\begin{equation*}\label{beta>0}
Q\left(\frac{\sqrt{2}\sin\frac{\pi}{M}\beta}{\sigma}\right)\leq\text{\normalfont{SEP}}\leq 2Q\left(\frac{\sqrt{2}\sin\frac{\pi}{M}\beta}{\sigma} \right).
\end{equation*}
Finally, Theorem \ref{sepbound} enables to characterize the SEP of the considered model \eqref{DCEsys}, as shown in the following corollary.
 \begin{corollary}\label{sepbound2}
 The {\normalfont{SEP}} of model \eqref{DCEsys} can be bounded as 
 \begin{equation}\label{in4}
\hspace{-0.2cm}\mathbb{E}_\alpha\hspace{-0.1cm}\left[Q\hspace{-0.05cm}\left(\hspace{-0.05cm}\frac{\sqrt{2}\sin\frac{\pi}{M}\alpha}{\sigma}\right)\right]\hspace{-0.1cm}\leq\hspace{-0.02cm}\text{\normalfont{SEP}}\hspace{-0.02cm}\leq \hspace{-0.03cm}2\mathbb{E}_\alpha\hspace{-0.1cm}\left[Q\hspace{-0.05cm}\left(\hspace{-0.05cm}\frac{\sqrt{2}\sin\frac{\pi}{M}\alpha}{\sigma}\hspace{-0.05cm}\right)\hspace{-0.05cm}\right],
\end{equation}
where 
\begin{equation}\label{alpha2}
\alpha=\frac{1}{\sqrt{N}}\sum_{i=1}^N|h_i|\left(\cos\theta_i-|\sin\theta_i|\cot\frac{\pi}{M}\right).
\end{equation}
 \end{corollary}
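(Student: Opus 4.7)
The plan is to apply Theorem~\ref{sepbound} conditional on the channel $\h$ and then integrate over $\h$. Given $\h$, the quantization errors $\theta_i$ are deterministic functions of $\h$ (and of the fixed transmit symbol, which by PSK rotational symmetry we may take to be $s_1$), and the model \eqref{DCEsys} matches \eqref{sys} exactly with $\beta=\tfrac{1}{\sqrt{N}}\sum_{i=1}^N|h_i|e^{j\theta_i}$. Invoking Theorem~\ref{sepbound} pointwise in $\h$ gives
\begin{equation*}
Q\!\left(\tfrac{\sqrt{2}\sin(\pi/M)\,\tilde{\alpha}(\h)}{\sigma}\right)\leq \mathrm{SEP}\mid\h\leq 2\,Q\!\left(\tfrac{\sqrt{2}\sin(\pi/M)\,\tilde{\alpha}(\h)}{\sigma}\right),
\end{equation*}
where $\tilde{\alpha}(\h):=\RR(\beta)-|\I(\beta)|\cot(\pi/M)$.

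Next I would bridge $\tilde{\alpha}$ to the separable $\alpha$ of \eqref{alpha2}. The key observation is the triangle inequality
\begin{equation*}
|\I(\beta)|=\Big|\tfrac{1}{\sqrt{N}}\sum_{i=1}^N|h_i|\sin\theta_i\Big|\leq\tfrac{1}{\sqrt{N}}\sum_{i=1}^N|h_i||\sin\theta_i|,
\end{equation*}
which, combined with $\cot(\pi/M)\geq 0$ for $M\geq 2$, yields $\tilde{\alpha}(\h)\geq \alpha(\h)$ pointwise. Since $Q$ is monotonically decreasing, $Q(\sqrt{2}\sin(\pi/M)\tilde{\alpha}/\sigma)\leq Q(\sqrt{2}\sin(\pi/M)\alpha/\sigma)$ pointwise in $\h$; plugging this into the conditional upper bound and taking expectation over $\h$ delivers the upper bound in \eqref{in4} immediately.

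The main obstacle I anticipate is the lower bound, because the same substitution $\tilde{\alpha}\mapsto\alpha$ loses in the wrong direction: Theorem~\ref{sepbound} alone only gives $\mathrm{SEP}\geq\mathbb{E}_\h[Q(\sqrt{2}\sin(\pi/M)\tilde{\alpha}/\sigma)]$, which is a priori weaker than the target $\mathbb{E}[Q(\sqrt{2}\sin(\pi/M)\alpha/\sigma)]$. To close this gap I would condition further on the magnitudes $(|h_i|,|\theta_i|)_{i=1}^N$ and exploit that, by the uniformity of $\arg(h_i)$, the signs $\varepsilon_i:=\mathrm{sgn}(\sin\theta_i)$ are i.i.d.\ uniform on $\{\pm 1\}$; $|\I(\beta)|$ is thus a symmetric random walk whose typical size is much smaller than the non-cancelling sum $\tfrac{1}{\sqrt{N}}\sum_i|h_i||\sin\theta_i|$ that appears in $\alpha$. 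An averaging-over-signs argument, together with the local convexity of the tail of $Q$ and a Khintchine-type control of $\mathbb{E}_\varepsilon|\sum_i\varepsilon_i|h_i||\sin\theta_i||$, is intended to trade the cancellation-prone $|\I(\beta)|$ inside $\tilde{\alpha}$ for the non-cancelling sum inside $\alpha$; nailing down this sign-symmetry step cleanly is the delicate part of the proof, after which the outer expectation over the magnitudes yields \eqref{in4}.
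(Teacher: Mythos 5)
Your upper bound is correct and complete: conditioning, invoking Theorem~\ref{sepbound} pointwise, and using $|\I(\beta)|\le\frac{1}{\sqrt{N}}\sum_i|h_i||\sin\theta_i|$ together with the monotonicity of $Q$ and $\cot\frac{\pi}{M}\ge 0$ delivers $\text{SEP}\le 2\mathbb{E}[Q(\sqrt{2}\sin\frac{\pi}{M}\alpha/\sigma)]$ exactly as you say. The genuine gap is the lower bound, and the repair you sketch cannot close it. Write $R=\frac{1}{\sqrt{N}}\sum_i|h_i|\cos\theta_i$, $S=\frac{1}{\sqrt{N}}\sum_i|h_i||\sin\theta_i|$, and $\tilde{\alpha}=R-|\I(\beta)|\cot\frac{\pi}{M}$, so that $\tilde{\alpha}\ge\alpha$ always. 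Theorem~\ref{sepbound} yields $\text{SEP}\ge\mathbb{E}[Q(c\tilde{\alpha})]$ with $c=\sqrt{2}\sin\frac{\pi}{M}/\sigma$, which is \emph{weaker} than the target $\mathbb{E}[Q(c\alpha)]$. Averaging over the Rademacher signs $\varepsilon_i$ only makes this worse: sign cancellation makes $|\I(\beta)|$ typically much smaller than $S$, hence $\tilde{\alpha}$ \emph{larger} and $Q(c\tilde{\alpha})$ \emph{smaller}; any Khintchine or convexity estimate applied to $\mathbb{E}_\varepsilon\bigl[Q\bigl(c(R\mp\cot\frac{\pi}{M}\,\I(\beta))\bigr)\bigr]$ produces quantities like $Q(cR)$ or $Q\bigl(c(R-\cot\frac{\pi}{M}\,\mathbb{E}|\I(\beta)|)\bigr)$, all of which are $\le Q(c(R-\cot\frac{\pi}{M}S))=Q(c\alpha)$. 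Indeed the conditional inequality $\mathbb{P}(\hat{s}\neq s\mid\h)\ge Q(c\alpha(\h))$ is simply false on realizations where the $\sin\theta_i$ cancel (already for $N=2$, $\theta_1=-\theta_2$), so no pointwise-plus-averaging mechanism of the kind you describe can succeed.

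For comparison, the paper's proof does not attempt any of this: it conditions on $\beta$, applies Theorem~\ref{sepbound}, and then directly identifies the safety margin $\RR(\beta)-|\I(\beta)|\cot\frac{\pi}{M}$ with the separable expression \eqref{alpha2}, i.e., it silently replaces $\bigl|\frac{1}{\sqrt{N}}\sum_i|h_i|\sin\theta_i\bigr|$ by $\frac{1}{\sqrt{N}}\sum_i|h_i||\sin\theta_i|$ — an identity that holds only when all the $\sin\theta_i$ share a sign (e.g., $N=1$). So you have correctly located a step the paper glosses over, but your proposal does not resolve it either. A resolution that does work uses your sign symmetry in the opposite way: restrict to the event $E=\{\theta_i\ge 0,\ \forall i\}$, on which $\tilde{\alpha}=\alpha$; since the signs of the $\theta_i$ are i.i.d.\ symmetric and independent of the magnitudes $(|h_i|,|\theta_i|)_i$ on which $\alpha$ depends, one gets
\begin{equation*}
\text{SEP}\;\ge\;\mathbb{E}\bigl[Q(c\tilde{\alpha})\mathbf{1}_E\bigr]\;=\;\mathbb{E}\bigl[Q(c\alpha)\mathbf{1}_E\bigr]\;=\;2^{-N}\,\mathbb{E}\bigl[Q(c\alpha)\bigr],
\end{equation*}
which is the corollary's lower bound up to the constant $2^{-N}$ — enough for every subsequent use in the diversity-order analysis, where only the exponent of $\rho$ matters.
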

 \begin{proof}
The considered model \eqref{DCEsys} is in the form of \eqref{sys} with 
\begin{equation}\label{beta}
\beta=\frac{1}{\sqrt{N}}\sum_{i=1}^N|h_i|e^{j\theta_i}.
\end{equation}
Note that $\beta$ in \eqref{beta} is a random variable, which is a function of both $\h$ and $\theta_i, \, i=1,2,\dots, N$.  Applying the total probability theorem, we can express the SEP  of model \eqref{DCEsys} as 
\begin{equation}\label{tpt}\begin{aligned}
\text{SEP}=\mathbb{P}(\hat{s}\neq s)&=\int_{\beta\in\C}\mathbb{P}\left(\hat{s}\neq s\mid\beta=x\right)p_\beta(x)dx\\
&=\mathbb{E}_{\beta}\left(\mathbb{P}(\hat{s}\neq s\mid\beta)\right).
\end{aligned}
\end{equation} One can show that $\beta$ and $s$ are independent. Then it follows  from \eqref{tpt} and Theorem \ref{sepbound} that 
\begin{equation*}
\mathbb{E}_\alpha\hspace{-0.1cm}\left[Q\hspace{-0.05cm}\left(\hspace{-0.05cm}\frac{\sqrt{2}\sin\frac{\pi}{M}\alpha}{\sigma}\right)\right]\leq\text{SEP}\leq 2\mathbb{E}_\alpha\hspace{-0.1cm}\left[Q\hspace{-0.05cm}\left(\hspace{-0.05cm}\frac{\sqrt{2}\sin\frac{\pi}{M}\alpha}{\sigma}\hspace{-0.05cm}\right)\hspace{-0.05cm}\right],
\end{equation*}
where $\alpha$ is given in \eqref{alpha2}.

\end{proof}

In the following section, we will use Corollary \ref{sepbound2} as the main tool  to analyze the  diversity order of the considered system. 


\section{DIVERSITY ORDER ANALYSIS}\label{section4}
In this section, we will first present our main diversity order results and give some explanations and discussions in Section \ref{section4}-A and then give a detailed proof in Section \ref{section4}-B.
\subsection{MAIN RESULTS}\label{sec:4a}
We first summarize the diversity order results in the following theorem. 
\begin{theorem}\label{diversity}
For a single-user MISO system with $N$ transmit antennas and M-PSK modulation, the achievable diversity order of the $L$-level quantized MF precoder in \eqref{MF} is given by 
\begin{equation*}
d=\left\{
\begin{aligned}
N,\quad \text{if }L&> M;\\
\frac{N}{2},\quad\text{if }L&=M;\\
0,~\quad \text{if }L&< M.
\end{aligned}\right.
\end{equation*}
\end{theorem}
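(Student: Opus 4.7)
The plan is to use Corollary \ref{sepbound2} to reduce the diversity analysis to the small-$\sigma$ asymptotics of $\mathbb{E}_\alpha[Q(\sqrt{2}\sin(\pi/M)\alpha/\sigma)]$. A preliminary observation that underlies all three cases is that since $\arg(h_i)$ is uniform on $[0,2\pi)$ and independent of $|h_i|$, the quantization errors $\theta_i$ are i.i.d.\ uniform on $[-\pi/L,\pi/L]$ and independent of $\{|h_i|\}_{i=1}^N$. Writing the summand of \eqref{alpha2} as $c(\theta_i):=\cos\theta_i-|\sin\theta_i|\cot(\pi/M)=\sin(\pi/M-|\theta_i|)/\sin(\pi/M)$, the three regimes separate cleanly according to the sign of $c(\theta_i)$: bounded below by a positive constant when $L>M$, nonnegative but possibly vanishing when $L=M$, and possibly strictly negative when $L<M$.

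For $L>M$, since $|\theta_i|\leq\pi/L<\pi/M$ deterministically, we have $\alpha\geq(c(\pi/L)/\sqrt{N})\|\h\|_2$. I would derive the upper bound $\text{SEP}=O(\rho^{-N})$ by combining $Q(x)\leq \tfrac12 e^{-x^2/2}$ with the MGF $\mathbb{E}[e^{-s|h_i|^2}]=1/(1+s)$. For the matching lower bound I would use $\alpha\leq\|\h\|_1/\sqrt{N}$ together with the fact that the Rayleigh density satisfies $\mathbb{P}(|h_i|\leq t)\sim t^2$, to conclude $\mathbb{P}(\alpha\leq\sigma)=\Omega(\rho^{-N})$ and hence $d=N$.

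For $L<M$, I would exhibit an explicit event of positive probability, for instance $\mathcal{E}:=\{|\theta_i|\in[(\pi/M+\pi/L)/2,\pi/L]\text{ and }|h_i|\in[1,2]\text{ for all }i\}$, on which $\alpha\leq -c_0\sqrt{N}$ for some $c_0>0$. Because $\mathbb{P}(\mathcal{E})$ does not depend on $\sigma$ and $Q$ of a large negative argument tends to $1$, the lower bound $\text{SEP}\geq Q(-\sqrt{2}\sin(\pi/M)c_0\sqrt{N}/\sigma)\,\mathbb{P}(\mathcal{E})\to\mathbb{P}(\mathcal{E})>0$ shows that the SEP is bounded away from zero, so $d=0$.

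The main obstacle is the borderline case $L=M$. A short change of variables shows that $c(\theta_i)$ has a positive, bounded density near the origin with $\mathbb{P}(c(\theta_i)\leq s)\sim (M\sin(\pi/M)/\pi)\,s$. The lower bound $\mathbb{P}(\alpha\leq\sigma)\geq\prod_i\mathbb{P}(|h_i|\leq 1)\mathbb{P}(c(\theta_i)\leq\sigma/\sqrt{N})=\Omega(\sigma^N)$ immediately gives $d\leq N/2$. The matching upper bound is the delicate step, since a deterministic lower bound $c(\theta_i)\geq \text{const}>0$, as in the $L>M$ case, is no longer available. Instead I would exploit $\alpha\geq 0$ and the nonnegativity of the cross terms in $\alpha^2$ to get $\alpha^2\geq N^{-1}\sum_i |h_i|^2 c(\theta_i)^2$, combine with the Chernoff bound and the independence across $i$, and reduce the problem to estimating, for each $i$,
\begin{equation*}
\mathbb{E}\!\left[\frac{1}{1+t_\sigma\,c(\theta_i)^2}\right],\qquad t_\sigma=\frac{\sin^2(\pi/M)}{N\sigma^2}.
\end{equation*}
Splitting this expectation at $c(\theta_i)\sim 1/\sqrt{t_\sigma}$ and using the positive density of $c(\theta_i)$ at $0$ gives $O(1/\sqrt{t_\sigma})=O(\sigma)$ per factor, and multiplying the $N$ independent factors yields $\text{SEP}=O(\sigma^N)=O(\rho^{-N/2})$ and hence $d\geq N/2$. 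The key difficulty here is that the correct $\rho^{-N/2}$ rate emerges only from the joint near-zero behavior of the Rayleigh magnitudes $|h_i|$ and of the quantization residuals $c(\theta_i)$, which must be handled by a finer argument than the purely Chernoff-type bound used in the other two cases.
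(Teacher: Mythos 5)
Your proposal is correct, and in its two hardest pieces it takes a genuinely different route from the paper. For the upper bound on $d$ (i.e., the SEP lower bound) when $L>M$, the paper goes through Craig's representation of the $Q$-function and an integral estimate of the resulting MGF, whereas you use the elementary small-ball argument $\mathbb{P}(\alpha\leq\sigma)\geq\prod_i\mathbb{P}(|h_i|\leq\sigma/\sqrt{N})=\Omega(\rho^{-N})$ combined with $Q(\sqrt{2}\sin\tfrac{\pi}{M}\alpha/\sigma)\geq Q(\sqrt{2}\sin\tfrac{\pi}{M})$ on that event; this is simpler and loses nothing for diversity-order purposes (the paper's route yields a sharper constant). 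The more substantial divergence is the $L=M$ upper bound on the SEP: the paper computes the exact PDF of $\alpha_i=|h_i|v_i$ (its Lemma \ref{pdf}), proves by induction a uniform bound on the PDF of the sum $\alpha$ (its Lemma \ref{pdf2}), and splits $\int_0^\infty e^{-\sin^2\frac{\pi}{M}\rho x^2}p_\alpha(x)\,dx$ at $\rho^{-(1-\epsilon)/2}$, obtaining $d\geq\tfrac{N}{2}(1-\epsilon)$ and letting $\epsilon\to0$. You instead drop the nonnegative cross terms to get $\alpha^2\geq N^{-1}\sum_i|h_i|^2c(\theta_i)^2$, which factorizes the Chernoff bound into $\prod_i\mathbb{E}\bigl[(1+t_\sigma c(\theta_i)^2)^{-1}\bigr]$ after integrating out the exponential $|h_i|^2$, and each factor is $O(t_\sigma^{-1/2})$ by splitting at $c(\theta_i)\sim t_\sigma^{-1/2}$. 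This avoids both PDF lemmas and the $\epsilon$-limiting step, and gives the clean rate $O(\rho^{-N/2})$ directly; the only point to be careful about is the unbounded density of $c(\theta_i)$ near $x=1$ when $M=2$, which sits in the region contributing only $O(t_\sigma^{-1})$ per factor and is therefore harmless. Your $L<M$ argument (an explicit positive-probability event forcing $\alpha\leq-c_0\sqrt{N}$) is a minor variant of the paper's, which simply computes $\mathbb{P}(v_i\leq0)=1-L/M$ and uses $Q(x)\geq\tfrac12$ for $x\leq0$. Both your lower bound for $L=M$ (factoring $\mathbb{P}(\alpha\leq\sigma)$ through $|h_i|\leq1$ and $c(\theta_i)\leq\sigma/\sqrt{N}$) and the preliminary observation that the $\theta_i$ are i.i.d.\ uniform on $[-\pi/L,\pi/L]$ and independent of the $|h_i|$ are sound and match what the paper uses implicitly.
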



Theorem 2 clearly shows that the number of quantization levels has a vital impact on the achievable diversity order of the considered system. In particular, when the number of quantization levels $L$ is greater than the PSK constellation order $M,$ the system SEP will decrease quickly to zero as the SNR goes to infinity; on the other hand, when $L<M,$ there will be a SEP floor, i.e., the system SEP will not decrease to zero even when the SNR tends to infinity. The interesting case is $L=M$ where the SEP of the system will decrease to zero as the SNR tends to infinity but with a slower rate compared to the case where $L>M.$ 
We also mention here a related work \cite{ADC},  which gave diversity order analysis for an uplink single-input-single-output (SISO) system with low-resolution analog-to-digital converters (ADCs). Interestingly, the diversity order results in \cite{ADC} are consistent with our results in Theorem \ref{diversity} for $N=1$, though they are derived for different systems and communication scenarios using completely different analysis tools.

In the rest part of this subsection, we will give an intuitive explanation of the diversity order results in Theorem \ref{diversity}. 
The discussions below are somewhat heuristic but shed useful light on why different diversity order results are obtained for the three different cases in Theorem \ref{diversity}. A rigorous proof of Theorem \ref{diversity} will be provided in the next subsection. 

 To begin, we give some intuitions on the relationship between the safety margin $\alpha$ in \eqref{alpha} and the system SEP.   It is obvious that in the noiseless case (i.e., $\rho=\infty$),  the intended symbol will be incorrectly decoded if and only if $\alpha\leq 0$, which corresponds to the case where the noiseless received signal lies outside the decision region of the intended symbol (see Fig. \ref{fig1}), i.e., $\text{SEP}=\mathbb{P}\left(\alpha\leq 0\right)$.  In the general case, roughly speaking, the safety margin describes how large the order of magnitude the additive noise is allowed to have such that the received signal still lies within the decision region, in which case an error will not occur.  
Note that as $\rho$ tends to infinity, the magnitude of the noise, i.e., $|n|$, is in the order of  $\rho^{-\frac{1}{2}}$. Then, the above discussions imply that the SEP is in the same order as $\mathbb{P}(\alpha\leq\rho^{-\frac{1}{2}})$. Therefore, characterizing the diversity order is equivalent to studying how fast the probability $\mathbb{P}(\alpha\leq\rho^{-\frac{1}{2}})$
 decreases as $\rho$ tends to infinity, i.e.,
 $$d=-\lim_{\rho\to+\infty}\frac{\log\mathbb{P}(\alpha\leq\rho^{-\frac{1}{2}})}{\log\rho}.$$
 The above equality will be implicitly shown in the proof of Theorem \ref{diversity} in the next subsection.
 
Next we focus on the estimation of $\mathbb{P}(\alpha\leq\rho^{-\frac{1}{2}})$. For the considered model, $\alpha$ in \eqref{alpha2} is a sum of $N$ i.i.d. random variables. We express $\alpha$ as 
 $\alpha=\frac{1}{\sqrt{N}}\sum_{i=1}^N\alpha_i$, where 
 \begin{equation}\label{alphai}
 \alpha_i=|h_i|v_i~\text{and }v_i=\cos\theta_i-|\sin\theta_i|\cot\frac{\pi}{M}
 \end{equation} with 
 $h_i\sim\mathcal{CN}(0,1)$ and $\theta_i$ uniformly distributed in $\left[-\frac{\pi}{L},\frac{\pi}{L}\right]$.
Note that when $L\geq M$, each $\alpha_i$ in \eqref{alphai} is nonnegative, and hence $\mathbb{P}(\alpha\leq\rho^{-\frac{1}{2}})$ is  in the same order as $[\mathbb{P}(\alpha_i\leq\rho^{-\frac{1}{2}})]^N$ (see Eqs. \eqref{I1} and \eqref{eq3} further ahead for the rigorous expressions) and
\begin{equation}\label{eq6}
\begin{aligned}
d&=-N\lim_{\rho\to+\infty}\frac{\log\mathbb{P}(\alpha_i\leq\rho^{-\frac{1}{2}})}{\log\rho}.\\
\end{aligned}
\end{equation} 
We next investigate the three cases in Theorem \ref{diversity}, i.e., $L>M,~ L=M, $ and $L< M$, separately. 

\textbf{Case 1: $\bm{L>M.}$} In this case, $v_i$ in \eqref{alphai} is bounded from below by a positive constant independent of $\rho$ and thus has no effect on estimating $\mathbb{P}\left(\alpha_i\leq \rho^{-\frac{1}{2}}\right)$ when $\rho$ tends to infinity. As such,  the distribution of $|h_i|$ will dominate in estimating  $\mathbb{P}\left(\alpha_i\leq \rho^{-\frac{1}{2}}\right)$, whose CDF is in the order of  $O(x^2)$ when $x$ is near $0$, making  $\mathbb{P}(\alpha_i\leq \rho^{-\frac{1}{2}})$ in the order of $\rho^{-1}$. Then, from \eqref{eq6}, we have $d=N$.

\textbf{Case 2: $\bm{L=M.}$} In this case, $v_i\geq 0$ and the probability of $v_i$ being very close to $0$ is nonzero. Consequently, either a small $v_i$ or a small $|h_i|$ can result in a small $\alpha_i$, in which case the CDF of $\alpha_i$  becomes in the order of $O(x)$ when $x$ is near 0, and thus $\mathbb{P}(\alpha_i\leq \rho^{-\frac{1}{2}})$ is in the order of $\rho^{-\frac{1}{2}}$ and $d=\frac{N}{2}$.

\textbf{Case 3: $\bm{L<M.}$} In this case, the probability of each $v_i$ being negative is nonzero, and thus the probability of $\alpha$  being negative is also nonzero, i.e.,  $\mathbb{P}\left(\alpha\leq 0\right)$  is  a constant independent of $\rho$. Hence, $d=0$.

\subsection{PROOF OF THEOREM \ref{diversity}}
In this subsection, we give the rigorous proof of Theorem \ref{diversity}.  
The key is to establish lower and upper bounds on the SEP of the following form by applying inequality \eqref{in4} and characterizing the distribution of $\alpha$ in \eqref{alpha2}:
\begin{equation}\label{outline}
l_{\text{low}}(\rho)^{-d}+o(\rho^{-d})\leq \text{SEP}\leq l_{\text{up}}(\rho)^{-d}+o(\rho^{-d}),
\end{equation}
where $l_{\text{low}}(\rho)$ and $l_{\text{up}}(\rho)$ are linear functions of $\rho$ 
 and $o(\rho^{-d})$ is a high-order infinitesimal of $\rho^{-d},$ i.e., $o(\rho^{-d})/\rho^{-d} \rightarrow 0$.  Then we can conclude from the definition in \eqref{def:diversity} and \eqref{outline} that the diversity order is $d$. The most nontrivial steps of the proof is to characterize the distribution of $\alpha$ in \eqref{alpha2}. Below is the detailed proof of Theorem \ref{diversity} in three separate cases.

\vspace{-0.5cm}
\subsubsection{Proof of the diversity order when ${L> M}$} 

We first consider the case where $L>M$. To begin, we derive upper and lower bounds on $\alpha$ in \eqref{alpha2}. Since the quantization error $\theta_i\in[-\frac{\pi}{L},\frac{\pi}{L}]$, we have 
\begin{equation*}
\begin{aligned}
\alpha&=\frac{1}{\sqrt{N}} \sum_{i=1}^N|h_i|\left(\cos\theta_i-\sin\theta_i \cot\frac{\pi}{M}\right)\\
&\geq\frac{1}{\sqrt{N}} \sum_{i=1}^N|h_i|\left(\cos\frac{\pi}{L}-\sin\frac{\pi}{L}\cot\frac{\pi}{M}\right)\\
&= \frac{c_0\|\h\|_1}{\sqrt{N}}\geq \frac{c_0\|\h\|_2}{\sqrt{N}},
\end{aligned}
\end{equation*}
where $c_0=\cos\frac{\pi}{L}-\sin\frac{\pi}{L}\cot\frac{\pi}{M}$ is a  positive constant when $L>M$. Similarly, we have $\alpha\leq \frac{\|\h\|_1}{\sqrt{N}}\leq \|\h\|_2.$
Using inequality \eqref{in4} and the above bounds on $\alpha$, we obtain 
\begin{subequations}
\begin{equation}\label{in22}
\text{SEP}\leq 2\mathbb{E}\left[Q\left(\frac{\sqrt{2}\sin\frac{\pi}{M}c_0\|\h\|_2}{\sqrt{N}\sigma}\right)\right]
\end{equation}
and 
\begin{equation}\label{in2}
\text{SEP}\geq\mathbb{E}\left[Q\left(\frac{\sqrt{2}\sin\frac{\pi}{M}\|\h\|_2}{\sigma}\right)\right].
\end{equation}
\end{subequations}
Note that $2\|\h\|_2^2$ is a chi-square random variable with $2N$ degrees of freedom, i.e., $2\|\h\|_2^2\sim \mathcal{X}^2_{2N}$. Hence, the moment generating function of $\|\h\|_2^2$ is 
$$\text{MGF}_{\|\h\|_2^2}(t)=\mathbb{E}\left[e^{t\|\h\|_2^2}\right]=\left(1-t\right)^{-N}.$$
Applying the well-known inequality $Q(x)\leq \frac{1}{2}e^{-\frac{1}{2}x^2}$, $x\geq0 $ \cite{digitalcommunication} to \eqref{in22}, we can upper bound the SEP as 
\begin{equation}\label{up1}
\begin{aligned}
\text{SEP}
&\leq \text{MGF}_{\|\h\|_2^2}\left(-\frac{\sin^2\frac{\pi}{M}c_0^2}{{N}\sigma^2}\right)\\
&=\left(1+\frac{\sin^2\frac{\pi}{M}c_0^2}{N}\rho\right)^{-N},
\end{aligned}
\end{equation}
where the last equality holds since $\rho=1/\sigma^2$.

On the other hand, by applying the Craig's representation of the $Q$-function \cite{craig}, i.e.,
\begin{equation}\label{Craig}
Q(x)=\frac{1}{\pi}\int_{0}^{\frac{\pi}{2}}e^{-\frac{x^2}{2\sin^2\theta}}d\theta,\quad x\geq 0,
\end{equation}  to \eqref{in2}, we can lower bound the SEP as 
\vspace{-0.1cm}\begin{equation*}\label{Craig2}
\begin{aligned}
\text{SEP}
&\geq\mathbb{E}\left[\frac{1}{\pi}\int_0^\frac{\pi}{2}e^{-\frac{\sin^2\frac{\pi}{M}\rho\|\h\|_2^2}{\sin^2\theta}}d\theta\right]\\
&=\frac{1}{\pi}\int_0^{\frac{\pi}{2}}\text{MGF}_{\|\h\|_2^2}\left(-\frac{\sin^2\frac{\pi}{M}\rho}{\sin^2\theta}\right)d\theta\\
&=\frac{1}{\pi}\int_0^{\frac{\pi}{2}}\left(1+\frac{\sin^2\frac{\pi}{M}}{\sin^2\theta}\rho\right)^{-N}d\theta,
\end{aligned}
\end{equation*}
where the first equality holds since the integrand is nonnegative and thus we are free to change the order of integral and expectation according to the Tonelli-Fubini Theorem (see, e.g., \cite{real}). 
Similar to \cite[Section III-A]{antenna}, we can further obtain the following lower bound on the SEP:
\begin{equation}\label{lb1}
\text{SEP}\geq\frac{1}{2\sqrt{\pi(N+\frac{1}{2})}}\left(1+\sin^2\frac{\pi}{M}\rho\right)^{-N}.
\end{equation}
It follows immediately from \eqref{up1} and \eqref{lb1} that $d=N$ in this case.
\vspace{-0.6cm}
 \subsubsection{Proof of the diversity order when $L=M$}
 As discussed in Case 2 in Section \ref{section4}-\ref{sec:4a}, when $L=M$, both $|h_i|$ and $v_i$, $i=1,2,\dots, N$, will play roles in the diversity order analysis, in which case the distribution of $\alpha$ needs to be investigated carefully. 
The following two lemmas give the PDF of  $\alpha_i$ and an upper bound on the PDF of $\alpha$ when $L=M$, respectively, which are important for our analysis.
\begin{lemma}\label{pdf}
When $L=M$, the \text{\normalfont{PDF}} of $\alpha_i$ is given by 
\begin{equation*}
 p_{\alpha_i}\hspace{-0.04cm}(x)\hspace{-0.06cm}=\hspace{-0.09cm}\left\{
 \begin{aligned}
\frac{2M\sin\frac{\pi}{M}}{\sqrt{\pi}}e^{-\sin^2\frac{\pi}{M}x^2}Q\left(\sqrt{2}\cos\frac{\pi}{M}x\right)\hspace{-0.08cm}, \hspace{0.1cm}\text{if }x&\geq0;\\
0,\hspace{5.1cm}\text{if }x&<0.
\end{aligned}
\right.
\end{equation*}

\end{lemma}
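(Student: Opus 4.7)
The plan is to condition on $|\theta_i|$ and use that $h_i$ and $\theta_i$ are independent, with $|h_i|$ Rayleigh-distributed (density $2r e^{-r^2}$ on $[0,\infty)$) and, since $\theta_i$ is uniform on $[-\pi/M,\pi/M]$, $|\theta_i|$ uniform on $[0,\pi/M]$ with density $M/\pi$. A preliminary simplification I would record first is the identity
\begin{equation*}
\cos\theta_i-|\sin\theta_i|\cot\frac{\pi}{M}=\frac{\sin(\pi/M-|\theta_i|)}{\sin(\pi/M)},
\end{equation*}
which follows from the sine-difference formula and makes visible that in the $L=M$ regime $v_i\in[0,1]$, hence $\alpha_i\ge 0$ almost surely. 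This disposes of the $x<0$ branch of the PDF.

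For $x\ge 0$, conditional on $|\theta_i|=\theta$ the variable $\alpha_i=|h_i|v(\theta)$, with $v(\theta)=\sin(\pi/M-\theta)/\sin(\pi/M)$, is a scaled Rayleigh random variable, so marginalizing yields
\begin{equation*}
p_{\alpha_i}(x)=\frac{M}{\pi}\int_{0}^{\pi/M}\frac{2x}{v(\theta)^2}e^{-x^2/v(\theta)^2}\,d\theta.
\end{equation*}
I would then perform the substitution $\phi=\pi/M-\theta$ followed by $t=\cot\phi$ (using $1/\sin^2\phi=1+t^2$ and $d\phi=-dt/\sin^2\phi$), which reduces the inner integral, up to the factor $e^{-x^2\sin^2(\pi/M)}$, to a Gaussian tail integral of the form $\int_{\cot(\pi/M)}^{\infty}e^{-x^2\sin^2(\pi/M)\,t^2}\,dt$. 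A final rescaling $u=\sqrt{2}\,x\sin(\pi/M)\,t$ turns this into a constant multiple of $Q(\sqrt{2}\,x\sin(\pi/M)\cot(\pi/M))=Q(\sqrt{2}\cos(\pi/M)\,x)$, and collecting prefactors produces exactly the claimed formula.

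The only non-mechanical step is the choice of the double substitution $\phi=\pi/M-\theta$, $t=\cot\phi$, which is essentially Craig's representation of $Q$ read in reverse. The main obstacle I anticipate is purely bookkeeping of the trigonometric constants through the chain of substitutions; conceptually the computation is a routine reduction of a mixture of scaled Rayleighs to a Gaussian tail.
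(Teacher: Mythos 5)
Your proposal is correct and follows essentially the same route as the paper: both reduce $p_{\alpha_i}$ to the integral $\frac{2M\sin^2\frac{\pi}{M}}{\pi}\int_0^{\pi/M}\frac{x}{\sin^2\theta}e^{-\sin^2\frac{\pi}{M}x^2/\sin^2\theta}\,d\theta$ and then evaluate it as a Gaussian tail via the cotangent substitution. The only difference is organizational — you condition on $|\theta_i|$ and marginalize directly, whereas the paper first records the PDF of $v_i$ as a separate lemma and then applies the product-density formula — and the two arrive at the identical intermediate integral.
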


\begin{lemma}\label{pdf2}
When $L=M$, the {\normalfont{PDF}} of $\alpha$ can be upper bounded as 
$$p_{\alpha}(x)\leq \frac{M^N\sin\frac{\pi}{M}}{\sqrt{\pi}},\quad x\geq 0.$$
\end{lemma}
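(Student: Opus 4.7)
The plan is to bound $p_{\alpha_i}$ uniformly using Lemma~\ref{pdf} and then transfer this bound to the sum $\alpha = \frac{1}{\sqrt{N}}\sum_{i=1}^N \alpha_i$ via a convolution argument that exploits nonnegativity of the summands.

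First, I would observe that when $L = M$ we have $\alpha_i = |h_i| v_i \ge 0$ for every $i$, since $v_i = \cos\theta_i - |\sin\theta_i|\cot\frac{\pi}{M}$ simplifies to $\sin(\frac{\pi}{M} - |\theta_i|)/\sin\frac{\pi}{M} \ge 0$ on $|\theta_i| \le \frac{\pi}{M}$. Starting from the exact formula in Lemma~\ref{pdf} and applying the elementary estimates $e^{-\sin^2\frac{\pi}{M} x^2} \le 1$ and $Q(\sqrt{2}\cos\frac{\pi}{M}\,x) \le Q(0) = \frac{1}{2}$ for $x \ge 0$ (valid because $\cos\frac{\pi}{M} \ge 0$, since $M$ is an integer with $M \ge 2$), I would obtain the uniform pointwise bound
\[
p_{\alpha_i}(x) \le \frac{M \sin\frac{\pi}{M}}{\sqrt{\pi}}, \qquad x \ge 0.
\]

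Second, I would set $\tilde{\alpha}_i = \alpha_i/\sqrt{N}$, so that $\alpha = \sum_{i=1}^N \tilde{\alpha}_i$ and $p_{\tilde\alpha_i}(x) = \sqrt{N}\,p_{\alpha_i}(\sqrt{N}\,x) \le \sqrt{N}\,M\sin\frac{\pi}{M}/\sqrt{\pi}$. Since the $\tilde\alpha_i$ are i.i.d.\ and supported on $[0,\infty)$, $p_\alpha$ is the $N$-fold convolution of $p_{\tilde\alpha_1}$ with itself. The key fact I would invoke is that if $X, Y$ are independent nonnegative random variables with densities $f, g$ and $g \le C$, then for $x \ge 0$,
\[
(f * g)(x) = \int_0^x f(y)\,g(x-y)\,dy \le C\int_0^x f(y)\,dy \le C.
\]
Iterating this bound through the $N$-fold convolution yields $p_\alpha(x) \le \sqrt{N}\,M\sin\frac{\pi}{M}/\sqrt{\pi}$ for $x \ge 0$.

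Finally, since $M \ge 2$, an easy induction gives $\sqrt{N} \le M^{N-1}$ for every $N \ge 1$, which upgrades the previous estimate to the desired $p_\alpha(x) \le M^N \sin\frac{\pi}{M}/\sqrt{\pi}$. The argument has essentially no obstacles; the main conceptual point is recognizing that the convolution identity immediately transfers a uniform density bound from any single summand to the whole sum once nonnegativity is in hand, so that the sharper ingredient $\sqrt{N}\,M$ (rather than $M^N$) is already implicit in the argument.
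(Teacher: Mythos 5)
Your proof is correct, and it takes a genuinely different route from the paper's. The paper proves the lemma by induction on the partial sums $S_n=\sum_{i=1}^n\alpha_i$, carrying a Gaussian factor through each step: it establishes $p_{S_n}(x)\le \frac{M^n\sin\frac{\pi}{M}}{\sqrt{n\pi}}e^{-\frac{1}{n}\sin^2\frac{\pi}{M}x^2}$ via an explicit Gaussian convolution identity (completing the square at each inductive step), and the final rescaling $p_\alpha(x)=\sqrt{N}p_{S_N}(\sqrt{N}x)$ exactly absorbs the $\sqrt{N}$ to produce the constant $M^N\sin\frac{\pi}{M}/\sqrt{\pi}$. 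You instead discard the exponential decay immediately, bound $p_{\alpha_i}$ by the constant $M\sin\frac{\pi}{M}/\sqrt{\pi}$, and invoke the elementary fact that convolving a probability density with a density bounded by $C$ yields a density bounded by $C$ (a single application to $p_\alpha = p_{\sum_{i<N}\tilde\alpha_i}*p_{\tilde\alpha_N}$ already suffices; the restriction to $\int_0^x$ is harmless but not even needed, since $\int_{\mathbb{R}}f=1$ gives the same bound). This yields the strictly sharper estimate $\sqrt{N}\,M\sin\frac{\pi}{M}/\sqrt{\pi}$, which you then degrade to the stated constant via $\sqrt{N}\le M^{N-1}$ (valid since $M\ge 2$ and $2^{N-1}\ge N\ge\sqrt{N}$). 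Your argument is more elementary, avoids the Gaussian integral computation entirely, and gives a better constant; since the lemma is only used downstream to bound the tail integral $I_2$, where any uniform constant bound suffices, nothing is lost by dropping the exponential factor that the paper's induction preserves.
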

The proofs of Lemmas \ref{pdf} and \ref{pdf2} are given in  Appendices \ref{appendixA} and  \ref{appendixB}, respectively. Now we are ready to give the diversity order analysis for the case of $L= M$. 

We first prove that $d\geq\frac{N}{2}$ by giving an upper bound on the SEP.  Note that $\alpha\geq0$ when $L=M$. Then from \eqref{in4} and using the fact that $Q(x)\leq \frac{1}{2}e^{-\frac{1}{2}x^2}$ for $x\geq 0$, we have 
 \begin{equation}\label{eq1}
 \begin{aligned}
 \text{SEP}
&\leq \mathbb{E}_\alpha\left[e^{-\sin^2\frac{\pi}{M}\rho\alpha^2}\right]\\
&=\int_0^{+\infty}e^{-\sin^2\frac{\pi}{M}\rho x^2}p_{\alpha}(x)dx.\\
\end{aligned}
\end{equation}
For any given $\epsilon>0$, the integral in \eqref{eq1} can be split into the sum of two integrals as 
\begin{equation}
\begin{aligned}
&\quad~\int_0^{+\infty}e^{-\sin^2\frac{\pi}{M}\rho x^2}p_{\alpha}(x)dx\\
&=\int_0^{\rho^{-\frac{1-\epsilon}{2}}}\hspace{-0.4cm}e^{-\sin^2\frac{\pi}{M}\rho x^2}p_{\alpha}(x)dx\hspace{-0.01cm}+\hspace{-0.1cm}\int_{\rho^{-\frac{1-\epsilon}{2}}}
^{+\infty}e^{-\sin^2\frac{\pi}{M}\rho x^2}p_{\alpha}(x)dx\\
&\triangleq I_1+I_2.
\end{aligned}
\end{equation}
Next we give upper bounds on $I_1$ and $I_2$ separately. For $I_1$, recalling the relationship between $\alpha$ and $\alpha_i$ and noting that $e^{-\sin^2\frac{\pi}{M}\rho x^2}\leq 1$,  we get the following inequality:
\begin{equation}\label{I1}
\begin{aligned}
I_1\leq \int_0^{\rho^{-\frac{1-\epsilon}{2}}}\hspace{-0.3cm}p_\alpha(x) dx
&=\mathbb{P}\left(0\leq\alpha\leq \rho^{-\frac{1-\epsilon}{2}}\right)\\
&\leq \prod_{i=1}^N \mathbb{P}\left(0\leq\alpha_i\leq \sqrt{N}\rho^{-\frac{1-\epsilon}{2}}\right).
\end{aligned}
\end{equation}
Moreover, it follows from Lemma \ref{pdf} that 
$$
\begin{aligned}
&\quad~\mathbb{P}\left(0\leq\alpha_i\leq \sqrt{N}\rho^{-\frac{1-\epsilon}{2}}\right)\\
&=\frac{2M\sin\frac{\pi}{M}}{\sqrt{\pi}}\int_0^{\sqrt{N}\rho^{-\frac{1-\epsilon}{2}}}\hspace{-0.2cm}e^{-\sin^2\frac{\pi}{M}x^2}Q\left(\sqrt{2}\cos\frac{\pi}{M}x\right)dx\\
&\leq\frac{M\sqrt{N}\sin\frac{\pi}{M}}{\sqrt{\pi}}\rho^{-\frac{1-\epsilon}{2}},
\end{aligned}
$$
where the inequality is due to $e^{-\sin^2\frac{\pi}{M}x^2}Q\left(\sqrt{2}\cos\frac{\pi}{M}x\right)\leq \frac{1}{2}$ for $x\geq 0$.
Combining the above inequality with \eqref{I1}, we have
\begin{equation}\label{I1_2}
I_1\leq\left(\frac{M\sqrt{N}\sin\frac{\pi}{M}}{\sqrt{\pi}}\right)^N\rho^{-\frac{N}{2}(1-\epsilon)}.
\end{equation}
Now we upper bound $I_2$ by applying Lemma \ref{pdf2}:
\begin{equation}\label{I2}
\begin{aligned}
I_2&\leq\frac{M^N\sin\frac{\pi}{M}}{\sqrt{\pi}}\int_{\rho^{-\frac{1-\epsilon}{2}}}^{+\infty}e^{-\sin^2\frac{\pi}{M}\rho x^2}dx\\
&=\frac{M^N}{\sqrt{\rho}}Q\left(\sqrt{2}\sin\frac{\pi}{M}\rho^{\frac{\epsilon}{2}}\right)\\
&\leq \frac{M^N}{2\sqrt{\rho}}e^{-\sin^2\frac{\pi}{M}\rho^\epsilon}.
\end{aligned}
\end{equation}
It follows from \eqref{eq1}--\eqref{I2} that 
$$\text{SEP}\leq \left(\frac{M\sqrt{N}\sin\frac{\pi}{M}}{\sqrt{\pi}}\right)^N\rho^{-\frac{N}{2}(1-\epsilon)}+\frac{M^N}{2\sqrt{\rho}}e^{-\sin^2\frac{\pi}{M}\rho^\epsilon},$$
which implies that 
$d\geq\frac{N}{2}(1-\epsilon).$
Since the above inequality holds for any $\epsilon>0$,  we have $d\geq \frac{N}{2}.$

Next we give a lower bound on the SEP, which in turn gives an upper bound on  $d$. Applying inequality \eqref{in4} and the Craig's representation of the $Q$-function in \eqref{Craig}, we have 
\begin{equation}\label{L=M:1}
\begin{aligned}
\text{SEP}&\geq 
\frac{1}{\pi}\int_{0}^{\frac{\pi}{2}}\int_{0}^{+\infty}e^{-\frac{\sin^2\frac{\pi}{M}\rho x^2}{\sin^2\theta}}p_{\alpha}(x)dx d \theta,
\end{aligned}
\end{equation}
where we have changed the order of integral according to the Tonelli-Fubini Theorem. 
Focusing on the inner integral, we have
\begin{equation*}
\begin{aligned}
\int_{0}^{+\infty}e^{-\frac{\sin^2\frac{\pi}{M}\rho x^2}{\sin^2\theta}}p_{\alpha}(x) d x&\geq\int_{0}^{\rho^{-\frac{1}{2}}}e^{-\frac{\sin^2\frac{\pi}{M}\rho x^2}{\sin^2\theta}}p_{\alpha}(x)d x\\
&\geq\mathbb{P}\left(0\leq\alpha\leq \rho^{-\frac{1}{2}}\right)e^{-\frac{\sin^2\frac{\pi}{M}}{\sin^2\theta}},\\
\end{aligned}
\end{equation*}
which, together with \eqref{Craig} and \eqref{L=M:1},  further implies 
\begin{equation}\label{eq2}
\begin{aligned}
\text{SEP}
&\geq Q\left(\sqrt{2}\sin\frac{\pi}{M}\right)\mathbb{P}\left(0\leq\alpha\leq \rho^{-\frac{1}{2}}\right).
\end{aligned}
\end{equation}
The term $\mathbb{P}\left(0\leq\alpha\leq \rho^{-\frac{1}{2}}\right)$ in \eqref{eq2} can further be lower bounded as 
\begin{equation}\label{eq3}
\mathbb{P}\left(0\leq\alpha\leq \rho^{-\frac{1}{2}}\right)\geq\prod_{i=1}^N\mathbb{P}\left(0\leq\alpha_i\leq \frac{\rho^{-\frac{1}{2}}}{{\sqrt{N}}}\right).
\end{equation}
According to Lemma \ref{pdf}, \begin{equation*}
\begin{aligned}
&\quad~\mathbb{P}\left(0\leq\alpha_i\leq\frac{\rho^{-\frac{1}{2}}}{\sqrt{N}}\right)\\
&=\frac{2M\sin\frac{\pi}{M}}{\sqrt{\pi}}\int_0^{{(N\rho)^{-\frac{1}{2}}}}\hspace{-0.2cm}e^{-\sin^2\frac{\pi}{M}x^2}Q\left(\sqrt{2}\cos\frac{\pi}{M}x\right)dx.
\end{aligned}
\end{equation*}
When $\rho$ is sufficiently large, we have
$$e^{-\sin^2\frac{\pi}{M}x^2}Q\left(\sqrt{2}\cos\frac{\pi}{M}x\right)\geq \frac{1}{4},~~ x\in\left[0,(N\rho)^{-\frac{1}{2}}\right],$$
and thus 
\begin{equation}\label{eq4}
\mathbb{P}\left(0\leq\alpha_i\leq\frac{\rho^{-\frac{1}{2}}}{\sqrt{N}}\right)\geq\frac{M\sin\frac{\pi}{M}\rho^{-\frac{1}{2}}}{2\sqrt{N\pi}}.
\end{equation}
Combining \eqref{eq2}--\eqref{eq4} yields the following lower bound on the SEP:
 \begin{equation*}
 \text{SEP}\geq Q\left(\sqrt{2}\sin\frac{\pi}{M}\right)\left(\frac{M\sin\frac{\pi}{M}\rho^{-\frac{1}{2}}}{2\sqrt{N\pi}}\right)^N,
\end{equation*}
which further implies $d\leq\frac{N}{2}.$
In conclusion, we have $d=\frac{N}{2}$.

\vspace{-0.5cm}

\subsubsection{Proof of the diversity order when $L<M$}
Finally, we prove that $d=0$ when $L<M$.
Using inequality \eqref{in4} and the fact that $Q(x)\geq \frac{1}{2}$ for $x\leq 0$, we have
\begin{equation}\label{eq5}
\begin{aligned}
\text{SEP}&\geq \mathbb{E}_{\alpha}\left[Q\left(\frac{\sqrt{2}\sin\frac{\pi}{M}\alpha}{\sigma}\right)\right]\\
&\geq \frac{1}{2}\mathbb{P}\left(\alpha\leq 0\right)\geq \frac{1}{2}\prod_{i=1}^N\mathbb{P}\left(\alpha_i\leq 0\right)\\
&{\color{black}~\qquad\qquad\qquad= \frac{1}{2}\prod_{i=1}^N\mathbb{P}\left(v_i\leq 0\right),}
\end{aligned}
\end{equation}
{\color{black}where the last equality holds since $\alpha_i=|h_i|v_i$ (see \eqref{alphai}) and $|h_i|\geq 0$.}
{\color{black}Using similar arguments as in Lemma \ref{pdf_vi} in Appendix A, we can derive the CDF of $v_i$ for $L<M$:
$$F_{v_i}(x)=1-\frac{L}{M}+\frac{L\arcsin\left(x\sin\frac{\pi}{M}\right)}{\pi},~~\text{if }L<M,$$
 and hence
\begin{equation}\label{eq_less0}
\begin{aligned}
&\mathbb{P}(v_i\leq 0)=F_{v_i}(0)=1-\frac{L}{M}>0.
\end{aligned}
\end{equation}
}
\hspace{-0.15cm}Substituting the above inequality into \eqref{eq5} gives $$\text{SEP}\geq \frac{1}{2}\left(1-\frac{L}{M}\right)^N,$$ which, together with the definition in \eqref{def:diversity}, shows  
$$0\leq d \leq \lim_{\rho\to\infty}\frac{-\ln\frac{1}{2}(1-L/M)^N}{\ln \rho}=0,$$ i.e.,
$d=0.$

{\color{black}We remark here that \eqref{eq_less0} holds regardless of the channel distribution, and hence the above diversity order result for $L<M$ (i.e., $d=0$) holds for any generic channel.} 

\section{NUMERICAL RESULTS}\label{section5}
In this section, we provide simulation results to verify the diversity order results in Theorem \ref{diversity}. All results are averaged over $10^9$ channel realizations. 
\begin{figure}[t]
\includegraphics[width=6cm]{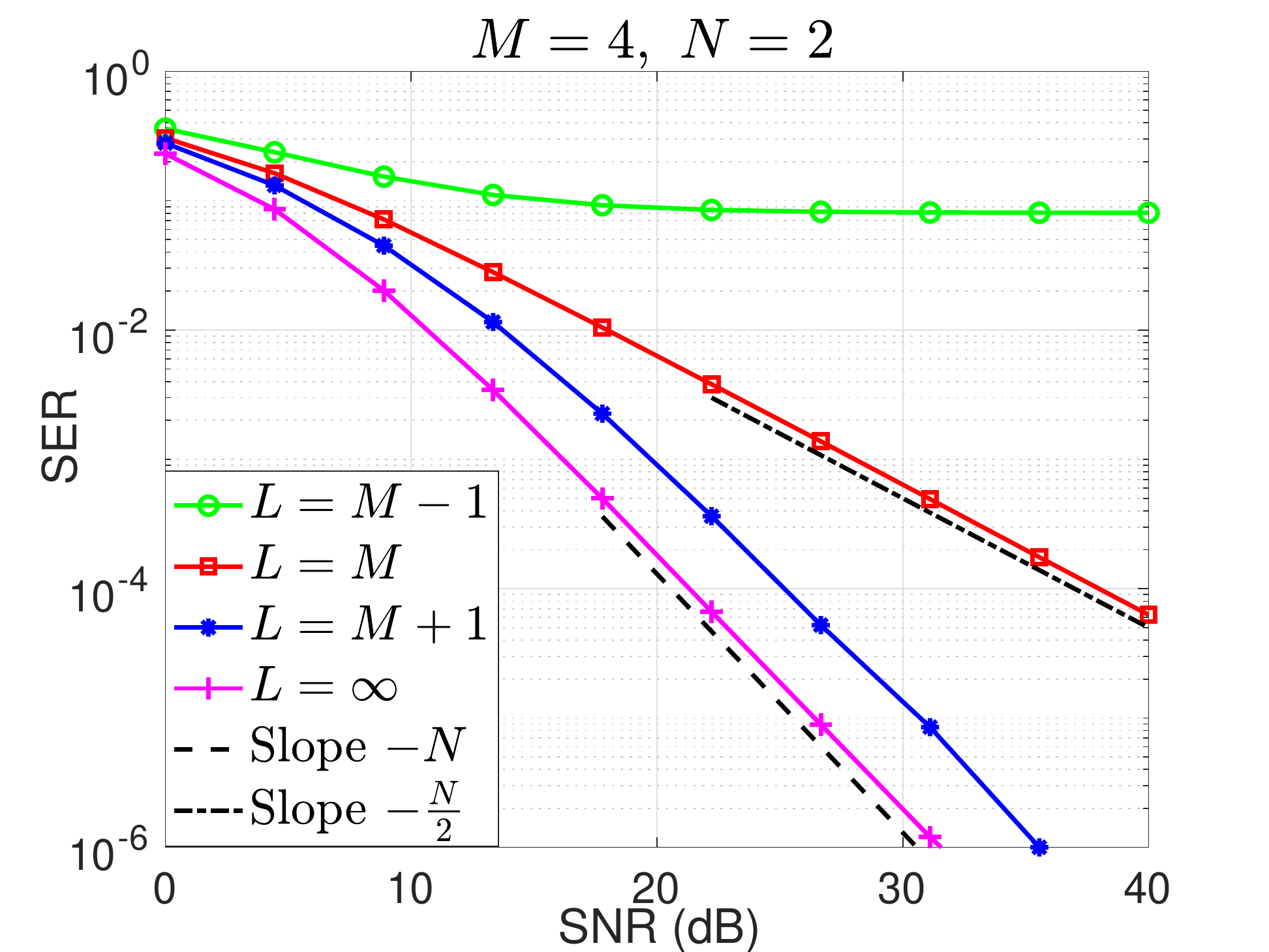}
\centering
\vspace{-0.05cm}
\caption{The SER versus the SNR, for different numbers of quantization levels $L$ with $M=4$ and $N=2$.}
\label{fig2}
\end{figure}
\begin{figure}[t]
\includegraphics[width=6cm]{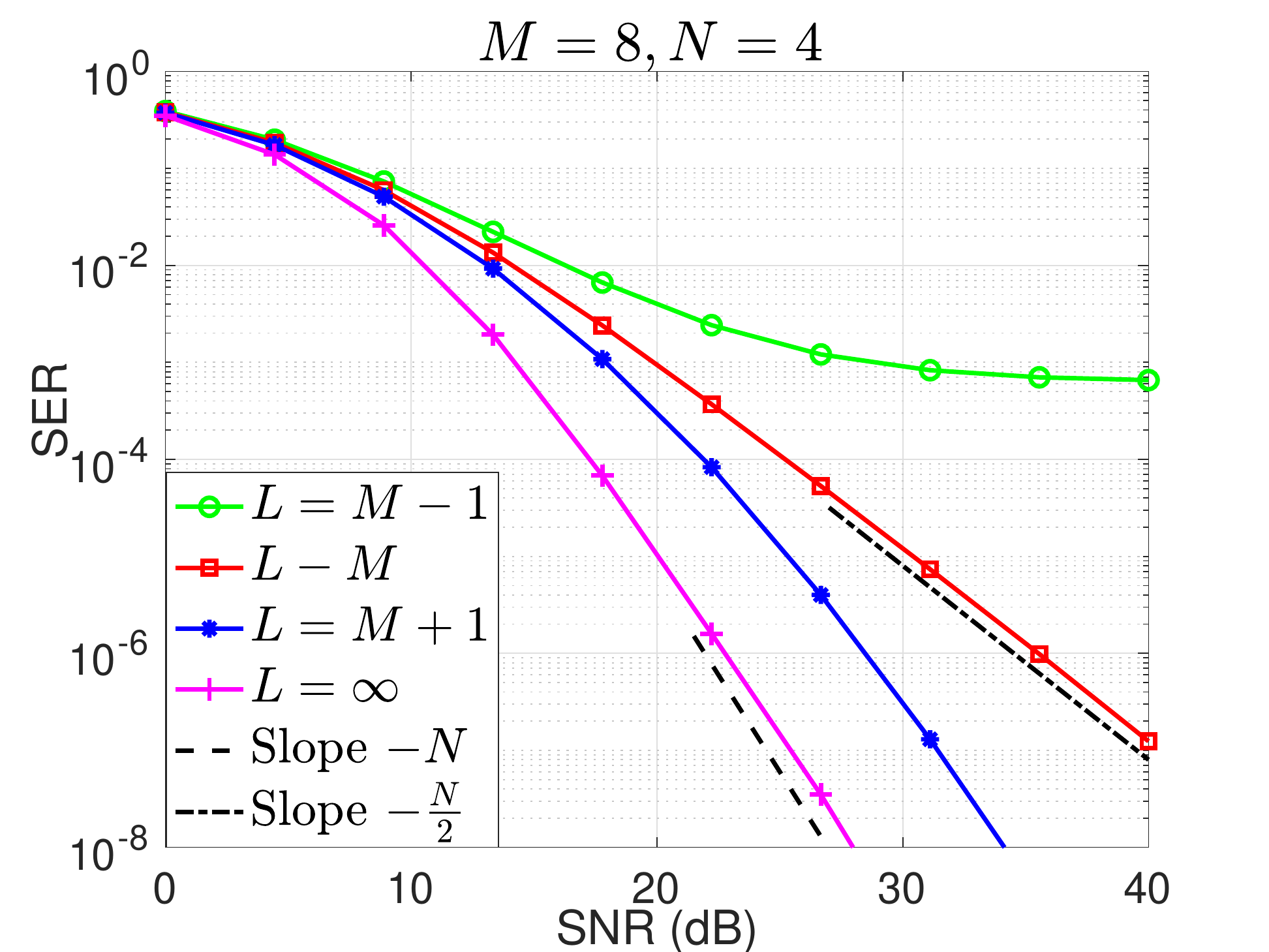}
\vspace{-0.05cm}
\centering
\caption{The SER versus the SNR, for different numbers of quantization levels of $L$ with $M=8$ and $N=4$.}
\label{fig3}
\end{figure}

In Fig. \ref{fig2} and Fig. \ref{fig3}, we consider  QPSK constellation with $N=2$ and 8-PSK constellation with $N=4$, respectively. We depict the SER as a function of the SNR and consider different numbers of  quantization levels, i.e., $L=M-1, L=M, L=M+1,$ and $L=\infty$, to demonstrate the effect of the number of quantization levels on the diversity order.  For clarity, we also report the lines with slopes $-N$ and  $-\frac{N}{2}$ to compare the simulation results with the analytical results. As shown in the figures, the curve of $L=M+1$ is parallel to the curve of $L=\infty$ in the high SNR regime, both of which are parallel to the line with slope $-N$; when $L=M$, the slope of the SER curve is nearly $-\frac{N}{2}$ when the SNR is high; and when $L<M$, there is an SER floor at high SNRs, i.e., the slope of the SER curve is $0$. These observations are consistent with the diversity order results in Theorem \ref{diversity}.

\section{CONCLUSION}\label{section6}
This paper characterized the diversity order of QCE transmission for a downlink single-user  MISO system  with $M$-PSK modulation. It has been shown that for the $L$-level quantized MF precoder, full diversity order is achievable when $L>M$, while only half and zero diversity order can be achieved when  $L=M$ and $L<M$, respectively. Simulation results verified our diversity order results. 

{\color{black}An important and interesting future work is to analyze the SEP performance of the multi-user QCE system. By simulations, we observe that  as the SNR grows, the SEP of the multi-user system does not decrease to zero and there is a positive SEP floor even when the number of quantization levels is infinite, which is in sharp contrast to the single-user case.
Therefore, the diversity order considered in this paper is no longer an appropriate performance metric in the multi-user scenario. Instead, it would be interesting to characterize the SEP floor at the infinite SNR for different numbers of quantization levels.  The SEP analysis of the multi-user system requires more sophisticated tools due to the more complicated system model. }
\appendices

\section{PROOF OF LEMMA \ref{pdf}}\label{appendixA}
Our goal in this section is to calculate the PDF of $\alpha_i=|h_i|v_i$ in \eqref{alphai} {\color{black}for the case where $L=M$}. Note that $2|h_i|^2\sim\mathcal{X}^2(2)$. Then the PDF of $|h_i|$ is given by $p_{|h_i|}(x)=2xe^{-x^2},~x\geq 0$. We next calculate the PDF of $v_i$.
\begin{lemma}\label{pdf_vi}
{\color{black}When $L=M$}, the PDF of $v_i$ is given by
\begin{equation*}\label{vi}
p_{v_i}(x)=\left\{
\begin{aligned}
\frac{M\sin\frac{\pi}{M}}{\pi\sqrt{1-\sin^2\frac{\pi}{M}x^2}},\quad &\text{if }x\in\left[0,1\right];\\
0,\hspace{1.3cm}~~~~~&\normalfont{\text{{otherwise}}}.
\end{aligned}\right.
\end{equation*}
\end{lemma}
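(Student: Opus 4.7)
The plan is to reduce the two-sided random variable $v_i$ to a one-sided one by exploiting symmetry, apply a trigonometric identity to rewrite $v_i$ as a simple sinusoid of a uniform angle, and then invoke the change-of-variables formula. Concretely, since $v_i=\cos\theta_i-|\sin\theta_i|\cot\frac{\pi}{M}$ depends on $\theta_i$ only through $|\theta_i|$, and since $\theta_i$ is uniform on $[-\pi/M,\pi/M]$ when $L=M$, the auxiliary variable $\phi\triangleq|\theta_i|$ is uniform on $[0,\pi/M]$ with density $M/\pi$. For $\phi$ in that range we may drop the absolute value and write $v_i=\cos\phi-\sin\phi\cot\frac{\pi}{M}$.

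Next I would apply the identity
\begin{equation*}
\cos\phi-\sin\phi\cot\frac{\pi}{M}=\frac{\sin\frac{\pi}{M}\cos\phi-\cos\frac{\pi}{M}\sin\phi}{\sin\frac{\pi}{M}}=\frac{\sin\!\left(\frac{\pi}{M}-\phi\right)}{\sin\frac{\pi}{M}}.
\end{equation*}
Setting $u=\pi/M-\phi$, which is again uniform on $[0,\pi/M]$ with density $M/\pi$, we obtain the clean representation
\begin{equation*}
v_i=\frac{\sin u}{\sin\frac{\pi}{M}},\qquad u\sim\mathrm{Unif}\!\left[0,\tfrac{\pi}{M}\right].
\end{equation*}
Because $\sin$ is strictly increasing on $[0,\pi/M]\subset[0,\pi/2]$, the map $u\mapsto v_i$ is a diffeomorphism from $[0,\pi/M]$ onto $[0,1]$, with inverse $u=\arcsin(v_i\sin\frac{\pi}{M})$ and Jacobian $\frac{du}{dv_i}=\frac{\sin(\pi/M)}{\sqrt{1-\sin^2\!\frac{\pi}{M}\,v_i^2}}$.

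Applying the change-of-variables formula $p_{v_i}(x)=p_u(u(x))\,|du/dx|$ then yields
\begin{equation*}
p_{v_i}(x)=\frac{M}{\pi}\cdot\frac{\sin\frac{\pi}{M}}{\sqrt{1-\sin^2\frac{\pi}{M}\,x^2}}\qquad\text{for }x\in[0,1],
\end{equation*}
and $p_{v_i}(x)=0$ otherwise, which is exactly the claimed expression. There is no real obstacle here; the only thing to watch is verifying that the image of $[0,\pi/M]$ under $u\mapsto\sin u/\sin(\pi/M)$ is precisely $[0,1]$ so that the support statement is correct, and that the density of $\phi=|\theta_i|$ really doubles to $M/\pi$ after folding about zero.
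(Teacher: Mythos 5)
Your proposal is correct and follows essentially the same route as the paper: both hinge on the identity $\cos|\theta_i|-\sin|\theta_i|\cot\frac{\pi}{M}=\sin\left(\frac{\pi}{M}-|\theta_i|\right)/\sin\frac{\pi}{M}$ with $|\theta_i|$ uniform on $[0,\pi/M]$, the paper then writing down the CDF $F_{v_i}(x)=\frac{M}{\pi}\arcsin\left(x\sin\frac{\pi}{M}\right)$ and differentiating, while you apply the equivalent change-of-variables formula for the monotone map $u\mapsto\sin u/\sin\frac{\pi}{M}$. The difference is purely cosmetic, and your bookkeeping (the folded density $M/\pi$, the Jacobian, and the support $[0,1]$) is all correct.
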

\vspace{-0.3cm}
\begin{proof}
From the definition of $v_i$ in \eqref{alphai}, we have  
\begin{equation}\label{vi}
\begin{aligned}
v_i&=\cos\theta_i-|\sin\theta_i|\cot\frac{\pi}{M}\\
&=\cos|\theta_i|-\sin|\theta_i|\cot\frac{\pi}{M}\\
&=\frac{\sin\left(\frac{\pi}{M}-|\theta_i|\right)}{\sin\frac{\pi}{M}},
\end{aligned}
\end{equation}
where $|\theta_i|$ is uniformly distributed in $[0,\frac{\pi}{M}]$ and  the second equality holds since $|\theta_i|\leq\frac{\pi}{M}\leq \frac{\pi}{2}$. 
It follows immediately that  
 $0\leq v_i\leq 1.$
Therefore, if $x\notin[0,1]$,  $p_{v_i}(x)=0$; if 
$x\in\left[0,1\right]$,
 the CDF of $v_i$ is 
\begin{equation}\label{Fvi}
\begin{aligned}
F_{v_i}(x)&=\mathbb{P}\left(v_i\leq x\right)\\
&=\mathbb{P}\left(|\theta_i|\geq\frac{\pi}{M}-\arcsin\left(x\sin\frac{\pi}{M}\right)\right)\\
&=\frac{M\arcsin\left(x\sin\frac{\pi}{M}\right)}{\pi}.
\end{aligned}
\end{equation}
Taking the derivative with respect to $x$ on both sides of \eqref{Fvi} gives the desired result.
\end{proof}
The following lemma gives  the PDF of the product  of two independent random variables.
\begin{lemma}[{\cite[p135]{probability}}]\label{product}
Let $X$ and $Y$ be two independent random variables with PDFs $p_X(\cdot)$ and $p_Y(\cdot)$. Then the PDF of $Z=XY$ is
\begin{equation*}
p_Z(z)=\int_{-\infty}^{+\infty}\frac{1}{|x|}p_X(x)p_Y\left(\frac{z}{x}\right)dx.\end{equation*}
\end{lemma}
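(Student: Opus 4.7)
The plan is to derive the density of the product $Z = XY$ via the standard conditioning-on-$X$ route, then differentiate the resulting CDF to obtain the claimed integral formula. Conceptually the proof boils down to three ingredients: (i) using independence of $X$ and $Y$ to turn the joint event $\{XY \leq z\}$ into a one-dimensional event on $Y$ for each fixed value of $X=x$, (ii) splitting the integration over $x$ into the regions $x > 0$ and $x < 0$ because the inequality $xy \leq z$ reverses direction when $x < 0$, and (iii) differentiating under the integral sign to pass from CDF to PDF.

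Concretely, I would begin by writing, using independence,
\begin{equation*}
F_Z(z) = \mathbb{P}(XY \leq z) = \int_{-\infty}^{+\infty} \mathbb{P}(XY \leq z \mid X=x)\, p_X(x)\, dx.
\end{equation*}
For $x > 0$ the conditional probability equals $F_Y(z/x)$, while for $x < 0$ it equals $1 - F_Y(z/x)$; the singleton $\{x=0\}$ contributes nothing because it has Lebesgue measure zero (and $p_X$ is a density). This lets me split the outer integral as
\begin{equation*}
F_Z(z) = \int_0^{+\infty} F_Y(z/x)\, p_X(x)\, dx + \int_{-\infty}^0 \bigl(1 - F_Y(z/x)\bigr)\, p_X(x)\, dx.
\end{equation*}
Next I would differentiate both sides with respect to $z$. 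On the $x > 0$ piece $\tfrac{d}{dz} F_Y(z/x) = \tfrac{1}{x} p_Y(z/x)$; on the $x < 0$ piece $\tfrac{d}{dz}\bigl(1 - F_Y(z/x)\bigr) = -\tfrac{1}{x} p_Y(z/x)$, and since $x < 0$ here this equals $\tfrac{1}{|x|} p_Y(z/x)$. Assembling the two pieces collapses the sign cases into the single factor $1/|x|$, yielding
\begin{equation*}
p_Z(z) = \int_{-\infty}^{+\infty} \frac{1}{|x|}\, p_X(x)\, p_Y\!\left(\frac{z}{x}\right) dx,
\end{equation*}
which is the claim.

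The main technical obstacle is justifying the interchange of differentiation and integration (Leibniz rule) under the minimal assumption that $p_X$ and $p_Y$ are genuine probability densities. I would handle this by invoking a dominated-convergence argument: because $p_Y$ is integrable and $p_X$ is a density, one can find a dominating function for the difference quotients that is $p_X$-integrable, or alternatively argue through the change-of-variables approach by considering the bijection $(X,Y)\mapsto(Z,W)=(XY,X)$ on $\{x\neq 0\}$ whose Jacobian has absolute value $|w|$, giving joint density $p_X(w)p_Y(z/w)/|w|$ which integrates over $w$ to the stated formula. Since the lemma is cited from a standard reference, I would keep the exposition short and lean on whichever of these two routes is cleanest.
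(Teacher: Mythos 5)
The paper does not prove this lemma at all: it is stated as an imported result with the citation \cite[p135]{probability}, so there is no in-paper argument to compare against, and your derivation must be judged on its own. Judged so, it is correct in substance. The decomposition $F_Z(z)=\int_0^{+\infty}F_Y(z/x)\,p_X(x)\,dx+\int_{-\infty}^0\bigl(1-F_Y(z/x)\bigr)p_X(x)\,dx$ is right (using $\mathbb{P}(Y=z/x)=0$ on the $x<0$ piece and that $\{X=0\}$ is $p_X$-null), and differentiating each piece correctly merges the two sign cases into the single factor $1/|x|$.

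The one soft spot is your primary justification of the differentiation step. A dominated-convergence bound on the difference quotients of $F_Y(z/x)$ ``under the minimal assumption that $p_X$ and $p_Y$ are genuine probability densities'' is not available in general: the mean-value bound on the quotient is $\frac{1}{|x|}\sup p_Y$, which requires $p_Y$ bounded, and even then $p_X(x)/|x|$ can fail to be integrable near $x=0$. Indeed $p_Z$ itself can be infinite at a point (the product of two independent standard normals has a logarithmic singularity at $z=0$), so the stated identity can only hold for almost every $z$ and no uniform domination exists. Your fallback is the correct repair and should be promoted to the main argument: either the change of variables $(x,y)\mapsto(w,z)=(x,xy)$ on the full-measure set $\{x\neq 0\}$, whose inverse has Jacobian of absolute value $1/|w|$, followed by marginalization in $w$; or, equivalently and with no differentiation at all, a Tonelli computation showing $\int_{-\infty}^{z}\left(\int_{-\infty}^{+\infty}\frac{1}{|x|}\,p_X(x)\,p_Y\!\left(\frac{t}{x}\right)dx\right)dt=\mathbb{P}(XY\leq z)$ via the substitution $y=t/x$ at fixed $x$, which identifies the inner integral as a version of the density of $Z$. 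Since the integrand is nonnegative, Tonelli applies without any regularity hypotheses, which is exactly the level of generality the lemma claims.
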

With the above two lemmas,  we are now ready to compute the PDF of $\alpha_i$.
Applying Lemma \ref{product} to $\alpha_i=v_i|h_i|$, we have
\begin{equation*}
\begin{aligned}
p_{\alpha_i}(x)
=&\left\{
\begin{aligned}
\int_0^1\frac{1}{z}p_{v_i}(z)p_{|h_i|}\left(\frac{x}{z}\right) dz, \qquad\text{if }x\geq 0;\\
0,\qquad\qquad\qquad \text{if }x<0.
\end{aligned}\right.
\end{aligned}
\end{equation*}
Now we simplify the above PDF expression for $x\geq 0$:
\begin{equation}\label{xgeq01}
\begin{aligned}
p_{\alpha_i}(x)=&~\frac{2M\sin\frac{\pi}{M}}{\pi}\int_{0}^1\frac{x}{z^2\sqrt{1-\sin^2\frac{\pi}{M}z^2}}e^{-\frac{x^2}{z^2}}dz\\
\xlongequal[]{z=\frac{\sin\theta}{\sin\frac{\pi}{M}}}&~\frac{2M\sin^2\frac{\pi}{M}}{\pi}\int_{0}^{\frac{\pi}{M}}\frac{x}{\sin^2\theta}e^{-\frac{\sin^2\frac{\pi}{M}x^2}{\sin^2\theta}}d\theta.
\end{aligned}
\end{equation}
Letting $u=\sin\frac{\pi}{M}x\cot\theta$, then 
\begin{equation*}\label{xgeq02}
\begin{aligned}
&\int_{0}^{\frac{\pi}{M}}\frac{x}{\sin^2\theta}e^{-\frac{\sin^2\frac{\pi}{M}x^2}{\sin^2\theta}}d\theta\\
=\hspace{0.08cm}&\frac{1}{\sin\frac{\pi}{M}}e^{-\sin^2\frac{\pi}{M}x^2}\int_{\cos\frac{\pi}{M}x}^{+\infty}e^{-u^2} du\\
=\hspace{0.08cm}&\hspace{-0.05cm}\frac{\sqrt{\pi}e^{-\sin^2\frac{\pi}{M}x^2}}{\sin\frac{\pi}{M}}Q\left(\sqrt{2}\cos\frac{\pi}{M}x\right)\\
\end{aligned}
\end{equation*}
Combining the above equality with \eqref{xgeq01} gives the PDF of $\alpha_i$ for $x\geq 0$ and 
completes the proof.
\section{PROOF OF LEMMA \ref{pdf2}}\label{appendixB}
In this section, we prove Lemma \ref{pdf2}. Let $S_n=\sum_{i=1}^n \alpha_i, ~n=1,2,\dots, N$. Then $\alpha=\frac{S_N}{\sqrt{N}}$.
{\color{black}
According to Lemma \ref{pdf}, the PDF of each $\alpha_i$ is zero when $x<0$, and thus $p_\alpha(x)$ is zero when $x<0$.
 When $x\geq0$, we claim that  the following inequality holds:
\begin{equation}\label{estimate}
p_{S_n}(x)\leq \frac{M^n\sin\frac{\pi}{M}}{\sqrt{n\pi}}e^{-\frac{1}{n}\sin^2\frac{\pi}{M}x^2},~~x\geq 0,
\end{equation}
which further implies that
$$p_{\alpha}(x)=\sqrt{N}p_{S_N}(\sqrt{N}x)\leq \frac{M^N\sin\frac{\pi}{M}}{\sqrt{\pi}},~~x\geq0.$$
We next prove  \eqref{estimate}  by induction.
Note that $Q(x)\leq \frac{1}{2}$ for all $x\geq 0$, and thus the inequality holds immediately for $n=1$. Now suppose that \eqref{estimate} holds for some $n$ with $1<n<N$. Then for $S_{n+1}$, we have
 \begin{equation*}
\begin{aligned}
&\quad~p_{S_{n+1}}(x)\\
&=\int_{-\infty}^{+\infty}p_{S_n}(y)p_{\alpha_{n+1}}(x-y) dy\\
&\leq\int_{0}^x \frac{M^n\sin\frac{\pi}{M}}{\sqrt{n\pi}}e^{-\frac{1}{n}\sin^2\frac{\pi}{M}y^2} \frac{M\sin\frac{\pi}{M}}{\sqrt{\pi}}e^{-\sin^2\frac{\pi}{M}(x-y)^2}dy\\
&\leq \frac{M^{n+1}\sin^2\frac{\pi}{M}}{\sqrt{n}\pi} \int_{-\infty}^{+\infty}e^{-\frac{1}{n}\sin^2\frac{\pi}{M}y^2} e^{-\sin^2\frac{\pi}{M}(x-y)^2}dy\\
&=\frac{M^{n+1}\sin\frac{\pi}{M}}{\sqrt{(n+1)\pi}}e^{-\frac{1}{n+1}\sin^2\frac{\pi}{M}x^2},
\end{aligned}
\end{equation*}
where  the first inequality uses \eqref{estimate} for $n$ and Lemma \ref{pdf} and the second inequality is due to the change of the integral interval. 
The above inequality shows that  \eqref{estimate} holds for $n+1$ and completes the proof.
\newpage


\begin{thebibliography}{10}
\providecommand{\url}[1]{#1}
\csname url@samestyle\endcsname
\providecommand{\newblock}{\relax}
\providecommand{\bibinfo}[2]{#2}
\providecommand{\BIBentrySTDinterwordspacing}{\spaceskip=0pt\relax}
\providecommand{\BIBentryALTinterwordstretchfactor}{4}
\providecommand{\BIBentryALTinterwordspacing}{\spaceskip=\fontdimen2\font plus
\BIBentryALTinterwordstretchfactor\fontdimen3\font minus
  \fontdimen4\font\relax}
\providecommand{\BIBforeignlanguage}[2]{{%
\expandafter\ifx\csname l@#1\endcsname\relax
\typeout{** WARNING: IEEEtran.bst: No hyphenation pattern has been}%
\typeout{** loaded for the language `#1'. Using the pattern for}%
\typeout{** the default language instead.}%
\else
\language=\csname l@#1\endcsname
\fi
#2}}
\providecommand{\BIBdecl}{\relax}
\BIBdecl

\bibitem{massivemimo2}
F.~Rusek, D.~Persson, B.~K. Lau, E.~G. Larsson, T.~L. Marzetta, O.~Edfors, and
  F.~Tufvesson, ``Scaling up {MIMO}: Opportunities and challenges with very
  large arrays,'' \emph{IEEE Signal Process. Mag.}, vol.~30, no.~1, pp. 40--60,
  Jan. 2013.

\bibitem{massivemimo1}
J.~G. Andrews, S.~Buzzi, W.~Choi, S.~V. Hanly, A.~Lozano, A.~C.~K. Soong, and
  J.~C. Zhang, ``What will {5G} be?'' \emph{IEEE J. Sel. Areas Commun.},
  vol.~32, no.~6, pp. 1065--1082, Jun. 2014.

\bibitem{massivemimo3}
L.~Lu, G.~Y. Li, A.~L. Swindlehurst, A.~Ashikhmin, and R.~Zhang, ``An overview
  of massive {MIMO}: Benefits and challenges,'' \emph{IEEE J. Sel. Topics
  Signal Process.}, vol.~8, no.~5, pp. 742--758, Oct. 2014.

\bibitem{pa}
O.~Blume, D.~Zeller, and U.~Barth, ``Approaches to energy efficient wireless
  access networks,'' in \emph{Proc. 4th Int. Symp. Commun. Control Signal
  Process.}, May 2010, pp. 1--5.

\bibitem{CE2}
S.~K. Mohammed and E.~G. Larsson, ``Single-user beamforming in large-scale
  {MISO} systems with per-antenna constant-envelope constraints: The doughnut
  channel,'' \emph{IEEE Trans. Wireless Commun.}, vol.~11, no.~11, pp.
  3992--4005, Nov. 2012.

\bibitem{CE}
S.~K. Mohammed and E.~G. Larsson, ``Per-antenna constant envelope precoding for large multi-user {MIMO}
  systems,'' \emph{IEEE Trans. Commun.}, vol.~61, no.~2, pp. 1059--1071, Mar.
  2013.

\bibitem{CEdesign1}
J.~Pan and W.-K. Ma, ``Constant envelope precoding for single-user large-scale
  {MISO} channels: Efficient precoding and optimal designs,'' \emph{IEEE J.
  Sel. Topics Signal Process.}, vol.~8, no.~5, pp. 982--995, Oct. 2014.

\bibitem{CEdesign2}
J.~Zhang, Y.~Huang, J.~Wang, B.~Ottersten, and L.~Yang, ``Per-antenna constant
  envelope precoding and antenna subset selection: A geometric approach,''
  \emph{IEEE Trans. Signal Process.}, vol.~64, no.~23, pp. 6089--6104, Dec.
  2016.

\bibitem{CEdesign3}
P.~V. Amadori and C.~Masouros, ``Constant envelope precoding by interference
  exploitation in phase shift keying-modulated multiuser transmission,''
  \emph{IEEE Trans. Wireless Commun.}, vol.~16, no.~1, pp. 538--550, Jan. 2017.

\bibitem{CEdesign4}
F.~Liu, C.~Masouros, P.~V. Amadori, and H.~Sun, ``An efficient manifold
  algorithm for constructive interference based constant envelope precoding,''
  \emph{IEEE Signal Process. Lett.}, vol.~24, no.~10, pp. 1542--1546, Oct.
  2017.

\bibitem{CEdesign5}
S.~Zhang, R.~Zhang, and T.~J. Lim, ``Constant envelope precoding for {MIMO}
  systems,'' \emph{IEEE Trans. Commun.}, vol.~66, no.~1, pp. 149--162, Jan.
  2018.

\bibitem{phy}
Z.~Wei, C.~Masouros, and F.~Liu, ``Secure directional modulation with few-bit
  phase shifters: Optimal and iterative-closed-form designs,'' \emph{IEEE
  Trans. Commun.}, vol.~69, no.~1, pp. 486--500, Jan. 2021.

\bibitem{radar}
S.~Ahmed, J.~S. Thompson, Y.~R. Petillot, and B.~Mulgrew, ``Finite alphabet
  constant-envelope waveform design for {MIMO} radar,'' \emph{IEEE Trans.
  Signal Process.}, vol.~59, no.~11, pp. 5326--5337, Nov. 2011.

\bibitem{irs}
H.~Yang, X.~Yuan, J.~Fang, and Y.-C. Liang, ``Reconfigurable intelligent
  surface aided constant-envelope wireless power transfer,'' \emph{IEEE Trans.
  Signal Process.}, vol.~69, pp. 1347--1361, Feb. 2021.

\bibitem{resolution1}
R.~Walden, ``Analog-to-digital converter survey and analysis,'' \emph{IEEE J.
  Sel. Areas Commun.}, vol.~17, no.~4, pp. 539--550, Apr. 1999.

\bibitem{trellis}
M.~Kazemi, H.~Aghaeinia, and T.~M. Duman, ``Discrete-phase constant envelope
  precoding for massive {MIMO} systems,'' \emph{IEEE Trans. Commun.}, vol.~65,
  no.~5, pp. 2011--2021, May 2017.

\bibitem{ciqce}
H.~Jedda, A.~Mezghani, A.~L. Swindlehurst, and J.~A. Nossek, ``Quantized
  constant envelope precoding with {PSK} and {QAM} signaling,'' \emph{IEEE
  Trans. Wireless Commun.}, vol.~17, no.~12, pp. 8022--8034, Dec. 2018.

\bibitem{AM}
J.-C. Chen, ``Efficient constant envelope precoding with quantized phases for
  massive {MU-MIMO} downlink systems,'' \emph{IEEE Trans. Veh. Technol.},
  vol.~68, no.~4, pp. 4059--4063, Apr. 2019.

\bibitem{GEMM}
M.~Shao, Q.~Li, W.-K. Ma, and A.~M.-C. So, ``A framework for one-bit and
  constant-envelope precoding over multiuser massive {MISO} channels,''
  \emph{IEEE Trans. Signal Process.}, vol.~67, no.~20, pp. 5309--5324, Oct.
  2019.

\bibitem{IDE}
C.-J. Wang, C.-K. Wen, S.~Jin, and S.-H. Tsai, ``Finite-alphabet precoding for
  massive {MU-MIMO} with low-resolution {DACs},'' \emph{IEEE Trans. Wireless
  Commun.}, vol.~17, no.~7, pp. 4706--4720, Jul. 2018.

\bibitem{SQUID}
S.~Jacobsson, G.~Durisi, M.~Coldrey, T.~Goldstein, and C.~Studer, ``Quantized
  precoding for massive {MU-MIMO},'' \emph{IEEE Trans. Commun.}, vol.~65,
  no.~11, pp. 4670--4684, Nov. 2017.

\bibitem{C3PO2}
O.~Castañeda, S.~Jacobsson, G.~Durisi, M.~Coldrey, T.~Goldstein, and
  C.~Studer, ``1-bit massive {MU-MIMO} precoding in {VLSI},'' \emph{IEEE J.
  Emerg. Sel. Topics Circuits Syst.}, vol.~7, no.~4, pp. 508--522, Dec. 2017.

\bibitem{SEP2}
F.~Sohrabi, Y.-F. Liu, and W.~Yu, ``One-bit precoding and constellation range
  design for massive {MIMO} with {QAM} signaling,'' \emph{IEEE J. Sel. Topics
  Signal Process.}, vol.~12, no.~3, pp. 557--570, Jun. 2018.

\bibitem{CIfirst}
H.~Jedda, A.~Mezghani, J.~A. Nossek, and A.~L. Swindlehurst, ``Massive {MIMO}
  downlink 1-bit precoding with linear programming for {PSK} signaling,'' in
  \emph{Proc. IEEE Workshop Signal Process. Adv. Wireless Commun.}, Jul. 2017,
  pp. 1--5.

\bibitem{CImodel}
A.~Li, C.~Masouros, F.~Liu, and A.~L. Swindlehurst, ``Massive {MIMO} 1-bit
  {DAC} transmission: A low-complexity symbol scaling approach,'' \emph{IEEE
  Trans. Wireless Commun.}, vol.~17, no.~11, pp. 7559--7575, Nov. 2018.

\bibitem{PBB}
A.~Li, F.~Liu, C.~Masouros, Y.~Li, and B.~Vucetic, ``Interference exploitation
  1-bit massive {MIMO} precoding: A partial branch-and-bound solution with
  near-optimal performance,'' \emph{IEEE Trans. Wireless Commun.}, vol.~19,
  no.~5, pp. 3474--3489, May 2020.

\bibitem{conference}
Z.~Wu, B.~Jiang, Y.-F. Liu, and Y.-H. Dai, ``A novel negative $\ell_1$ penalty
  approach for multiuser one-bit massive {MIMO} downlink with {PSK}
  signaling,'' in \emph{Proc. IEEE Int. Conf. Acoust., Speech, Signal
  Process.}, May 2022, pp. 5323--5327.

\bibitem{journal}
\BIBentryALTinterwordspacing
Z.~Wu, B.~Jiang, Y.-F. Liu, and Y.-H. Dai, ``{CI}-based one-bit precoding for multiuser downlink massive {MIMO}
  systems with {PSK} modulation: A negative $\ell_1$ penalty approach,'' 2021.
  [Online]. Available: \url{https://arxiv.org/abs/2110.11628}
\BIBentrySTDinterwordspacing

\bibitem{MFrate}
Y.~Li, C.~Tao, A.~Lee~Swindlehurst, A.~Mezghani, and L.~Liu, ``Downlink
  achievable rate analysis in massive {MIMO} systems with one-bit {DAC}s,''
  \emph{IEEE Commun. Lett.}, vol.~21, no.~7, pp. 1669--1672, Jul. 2017.

\bibitem{frequency}
S.~Jacobsson, G.~Durisi, M.~Coldrey, and C.~Studer, ``Linear precoding with
  low-resolution {DAC}s for massive {MU-MIMO-OFDM} downlink,'' \emph{IEEE
  Trans. Wireless Commun.}, vol.~18, no.~3, pp. 1595--1609, Mar. 2019.

\bibitem{ZF}
A.~K. Saxena, I.~Fijalkow, and A.~L. Swindlehurst, ``Analysis of one-bit
  quantized precoding for the multiuser massive {MIMO} downlink,'' \emph{IEEE
  Trans. Signal Process.}, vol.~65, no.~17, pp. 4624--4634, Sept. 2017.

\bibitem{CI3}
C.~Masouros, T.~Ratnarajah, M.~Sellathurai, C.~B. Papadias, and A.~K. Shukla,
  ``Known interference in the cellular downlink: {A} performance limiting
  factor or a source of green signal power?'' \emph{IEEE Commun. Mag.},
  vol.~51, no.~10, pp. 162--171, Oct. 2013.

\bibitem{CI2}
C.~Masouros, M.~Sellathurai, and T.~Ratnarajah, ``Vector perturbation based on
  symbol scaling for limited feedback {MISO} downlinks,'' \emph{IEEE Trans.
  Signal Process.}, vol.~62, no.~3, pp. 562--571, Feb. 2014.

\bibitem{CItutorial}
A.~Li, D.~Spano, J.~Krivochiza, S.~Domouchtsidis, C.~G. Tsinos, C.~Masouros,
  S.~Chatzinotas, Y.~Li, B.~Vucetic, and B.~Ottersten, ``A tutorial on
  interference exploitation via symbol-level precoding: Overview,
  state-of-the-art and future directions,'' \emph{IEEE Commun. Surveys Tuts.},
  vol.~22, no.~2, pp. 796--839, 2nd Quart. 2020.

\bibitem{diversity}
L.~Zheng and D.~Tse, ``Diversity and multiplexing: a fundamental tradeoff in
  multiple-antenna channels,'' \emph{IEEE Trans. Inf. Theory}, vol.~49, no.~5,
  pp. 1073--1096, May 2003.

\bibitem{SEP3}
M.~Shao, Q.~Li, Y.~Liu, and W.-K. Ma, ``Multiuser one-bit massive {MIMO}
  precoding under {MPSK} signaling,'' in \emph{Proc. IEEE Global Conf. Signal
  Inf. Process.}, Nov. 2018, pp. 833--837.

\bibitem{digitalcommunication}
J.~G. Proakis and M.~Salehi, \emph{Digital Communications}, 5th~ed.\hskip 1em
  plus 0.5em minus 0.4em\relax New York, NY, USA: McGraw-Hill, 2008.

\bibitem{ADC}
S.~Gayan, R.~Senanayake, H.~Inaltekin, and J.~Evans, ``Low-resolution
  quantization in phase modulated systems: Optimum detectors and error rate
  analysis,'' \emph{IEEE Open J. Commun. Soc.}, vol.~1, pp. 1000--1021, Jul.
  2020.

\bibitem{craig}
J.~Craig, ``A new, simple, and exact result for calculating the probability of
  error for two-dimensional signal constellations,'' in \emph{Proc. IEEE
  Military Commun. Conf.}, Nov. 1991, pp. 571--575.

\bibitem{real}
R.~M. Dudley, \emph{Real Analysis and Probability}, 2nd~ed.\hskip 1em plus
  0.5em minus 0.4em\relax Cambridge, UK: Cambridge University Press, 2002.

\bibitem{antenna}
R.~Jiang and Y.-F. Liu, ``Antenna efficiency in massive {MIMO} detection,'' in
  \emph{Proc. IEEE Workshop Signal Process. Adv. Wireless Commun.}, Sept. 2021,
  pp. 51--55.

\bibitem{probability}
V.~Rohatgi and A.~Saleh, \emph{An Introduction to Probability and Statistics},
  3rd~ed.\hskip 1em plus 0.5em minus 0.4em\relax Hoboken, NJ, USA: John Wiley
  \& Sons, 2015.

\end{thebibliography}

\end{document}